\newtheorem{theorem}{Theorem}
\newtheorem{corollary}{Corollary}
\newtheorem{lemma}{Lemma}
\title{\LARGE \bf
Robust Online and Distributed Mean Estimation \\Under Adversarial Data Corruption}
\author{Tong Yao and Shreyas Sundaram
\thanks{Tong Yao and Shreyas Sundaram are with Elmore Family School of Electrical and Computer Engineering at Purdue University. 
Email: \{yao127,  sundara2\}@purdue.edu
}  
\thanks{This research was supported by National Science Foundation (NSF) grant CMMI 1638311.
}
}
\begin{document}

\maketitle
\thispagestyle{empty}
\pagestyle{empty}


\begin{abstract}
We study robust mean estimation in an online and distributed scenario in the presence of adversarial data attacks. At each time step, each agent in a network receives a potentially corrupted data point, where the data points were originally independent and identically distributed samples of a random variable. We propose online and distributed algorithms for all agents to asymptotically estimate the mean. We provide the error-bound and the convergence properties of the estimates to the true mean under our algorithms. Based on the network topology, we further evaluate each agent's trade-off in convergence rate between incorporating data from neighbors and learning with only local observations.
\end{abstract}


\section{Introduction} 
Multi-agent cooperative learning scenarios, where agents in a network collect data and coordinate with each other to make inferences, have received significant attention in the research community. In the era of large-scale and decentralized systems, distributed learning has many applications in machine learning, data science, and decision making, e.g., multi-armed bandits \cite{landgren2016distributed}, distributed correlation estimation \cite{DGAMA}, hypothesis testing \cite{mitra2020new}, etc. 

Due to the distributed nature of such systems, the data gathered by each agent may be corrupted in various ways (e.g., through adversarial attacks, or faulty sensor readings). Learning the true values and reaching consensus robustly and resiliently in the presence of corrupted data and misbehaving agents have been studied extensively in the literature. For example, \cite{mitra2020new} and \cite{yin2018byzantine} propose distributed learning algorithms that are robust against Byzantine attacks, \cite{dubey2020cooperative} considers distributed heavy-tail stochastic bandit problems with robust mean estimators, and \cite{leblanc2013resilient} analyzes resilient consensus in the presence of malicious agents.

Estimating the expected value is one of the most fundamental problems in statistics. Robust statistics address the problem of mean estimation under outliers and attacks, with classical estimators including median-of-means \cite{ALON1999137} and the trimmed mean \cite{tukey1963trim,ronchetti2009robust}. We refer the readers to the survey on robust mean estimation \cite{lugosi2019survey}. Early pioneering works on robust statistics attempt to estimate the mean given an outlier model \cite{tukey1960survey, huber1992robust}, and recent advancements consider stronger contamination models \cite{diakonikolas2020outlier, lugosi2021robust}. 

In this paper, we study the problem of estimating the mean of a stream of corrupted data arriving at agents in the network.  We design online and distributed algorithms to achieve robust estimation of the mean; these algorithms allow each agent to utilize its local data stream and information from neighbors to obtain accurate estimates of the true mean with high confidence, given a sequence of arbitrarily corrupted data under a strong contamination model.  

\subsubsection*{Contributions} First, we provide two online algorithms that recursively utilize a modified trimmed mean estimator to estimate the mean from arbitrarily corrupted data (with an upper bound on the fraction of corrupted samples). Each of the two algorithms has different computational requirements and provides different performance guarantees. Second, we extend our online algorithm to a cooperative multi-agent setting, such that all the agents in the network can collaboratively estimate the mean (given a certain amount of arbitrarily corrupted data), and improve their estimates through communication with their neighbors.  In all cases, we provide bounds on the estimation errors as a function of the number of samples and the fraction of corrupted samples.  

\section{Problem Formulation}
\subsection{Data and Corruption Model}
Let $X \in \mathbb{R}$ be a real-valued random variable that has finite variance $\sigma_X^2$. Let $\mu = \mathbb{E}[X]$ and define $\overline{X} = X - \mu$. The mean is assumed to be unknown, and we wish to estimate it through our algorithms. In this work, we assume that $X$ has an absolutely continuous distribution.
At each time step $t \in \{1,2,\ldots\}$, an independent and identically distributed copy of $X$ is collected. We denote the data point collected at time $t$ by $x_t$ and denote the set of data collected up to time $t$ as $X_t = \{x_1,x_2, \ldots, x_t\}.$ 

We consider the existence of an adversary who can inspect all the samples and replace them with arbitrary values. We assume that due to a limited budget of the adversary, the total number of corrupted samples is at most $\eta t$ up to time $t$, where the corruption parameter $\eta \in (0, 1/16)$. 
We will discuss the range of the corruption in Section \ref{sec: background}. We say that a set of samples is $\eta$-corrupted if it is generated by the above process. We apply the tilde notation to any potentially corrupted data or dataset, e.g., $\tilde{x}_t$ is a potentially corrupted data and $\tilde{X}_t = \{\tilde{x}_1,\tilde{x}_2,\ldots,\tilde{x}_t\}$ is a set of $\eta$-corrupted data up to time $t$.

\subsection{Network Model}
Consider a group of $m$ agents, $m \in \mathbb{N}$, with a known, unweighted, undirected, and connected communication graph $\mathcal{G} = (\mathcal{V},\mathcal{E})$, where $\mathcal{V} = \{1,2,\ldots,m\}$ is the set of vertices representing the agents and $\mathcal{E} \subseteq \mathcal{V} \times \mathcal{V}$ is the set of edges. If an edge $({i},{j}) \in \mathcal{E}$, agent $i$ and $j$ can communicate with each other. The neighbors of agent $i \in \mathcal{V}$ and agent $i$ itself are represented by the set $\mathcal{N}_i = \{j \in \mathcal{V}: (i,j) \in \mathcal{E}\} \cup \{i\}$, which is the inclusive neighborhood of agent $i$. 

At each time step $t$, each agent $i \in \mathcal{V}$ collects a potentially corrupted data point $\tilde{x}_{i,t}$. In this work, we consider a scenario where at each time step $t$, every agent in the network $i \in \mathcal{V}$ has at most $\eta t$ corrupted data. 

This paper aims to first design \textit{online} algorithms such that given any $\eta-$corrupted data arriving in a stream, the algorithms efficiently output good estimates of the mean of $X$ with high probability. Secondly, we formulate a \textit{distributed} algorithm for each agent $i$ to calculate and update the estimates of the mean of data in real-time. 
Given a sequence of corrupted observations of agent $i$, denoted by $\{\tilde{x}_{i,1},\tilde{x}_{i,2},\tilde{x}_{i,3},\ldots\}$, the objective of $i$ is to perform an online inference of the mean $\hat{\mu}_{i,t}$ at time $t$ by incorporating its local data and information from its neighbors. In particular, we want all agents to reach consensus on their estimates asymptotically, i.e., as $t \to \infty$, the estimate of each agent converges asymptotically to the agreement: 
$\lim_{t \to \infty}\hat{\mu}_{1,t} = \lim_{t \to \infty}\hat{\mu}_{2,t} = \ldots = \lim_{t \to \infty}\hat{\mu}_{m,t} \approx \mu.$  

\section{Trimmed Mean Estimator}\label{sec: background}
In this section, we introduce a modified version of the trimmed mean estimator in \cite{lugosi2021robust} on which we build our online and distributed algorithms. To simplify notations, we describe the estimator for any agent $i \in \mathcal{V}$ and omit the subscript $i$.

At time step $2t$, each agent has $2t$ samples after an adversary corrupts up to $2 \eta t$ points from the original independent and identically distributed (i.i.d.) samples of the random variable $X$. The samples are arbitrarily divided into two sets, denoted by $\tilde{X}_t = \{\tilde{x}_{1},\ldots,\tilde{x}_{t}\}$ and $\tilde{Y}_t = \{\tilde{y}_{1}, \ldots, \tilde{y}_{t}\}$. 
Given the corruption parameter $\eta \in (0,1/16)$ and a desired confidence level $\delta \in (0,1)$, define 
\begin{equation}\label{eqn: epsilon def}
\epsilon_t = 8\eta + 12\frac{\log(4/\delta)}{t}.    
\end{equation}
Note that for sufficiently large $t$, $\epsilon_t < 0.5$ since $\eta < 1/16$. 

Let $\tilde{y}_1^* \leq \tilde{y}_2^* \leq \ldots \leq \tilde{y}_t^*$ be a non-decreasing rearrangement of $\tilde{y}_1, \ldots, \tilde{y}_t \in \mathbb{R}$. Define $\alpha_t = \tilde{y}_{\epsilon_t t}^*$ and $\beta_t = \tilde{y}_{(1-\epsilon_t)t}^*$ to be {\it trimming values}. Note that $\alpha$ and $\beta$ are both agent specific and time specific but we omit the agent subscript $i$ and time subscript $t$ temporarily for convenience.
For $\alpha \leq \beta$, and $x \in \mathbb{R}$, define the trim estimator  
\begin{equation}
\phi_{\alpha,\beta}(x) = 
\begin{cases}
\beta &\text{if } x > \beta,\\
x &\text{if } x \in [\alpha,\beta],\\
\alpha &\text{if } x < \alpha.
\end{cases}    
\end{equation}
The estimated mean after trimming is given by 
\[\hat{\mu}_{t} = \frac{1}{t} \sum_{j = 1}^t \phi_{\alpha,\beta}(\tilde{x}_{j}). \]

For $0<p<1$, define the quantile $Q_p(\overline{X}) = \sup\{M \in \mathbb{R}: \mathbb{P}(\overline{X} \geq M) \geq 1-p\}$. We have $\mathbb{P}(\overline{X} \geq Q_p(\overline{X})) = 1-p$ and from Chebyshev’s inequality, 
\begin{equation*}
\mathbb{P}(\overline{X} \geq Q_p(\overline{X})) = 1-p \leq \frac{\sigma_X^2}{ Q_p^2(\overline{X})}   
\end{equation*}
and as a result,
\begin{equation}\label{eqn: upper bound on Q}
|Q_{p}(\overline{X})| \leq \frac{\sigma_X}{\sqrt{1-p}}.
\end{equation}
The estimation error is defined in \cite{lugosi2021robust} to be 
\begin{equation}\label{eqn: E}
E(\epsilon_t,X) \coloneqq \max\{\mathbb{E}[|\overline{X}|_{\mathbf{1}_{\overline{X} \leq Q_{\epsilon_t /2}(\overline{X})}}],\mathbb{E}[|\overline{X}|_{\mathbf{1}_{ \overline{X} \geq Q_{1-\epsilon_t/2}(\overline{X})}}]\},   
\end{equation} where $\mathbf{1}$ is the indicator function and it is equal to one when the condition is satisfied, and zero otherwise. 

From \cite{lugosi2021robust}, we have that for every $X$,
\begin{equation}\label{eqn: upper of E}
    E(\epsilon_t,X) \leq \sigma_X\sqrt{8\epsilon_t}.
\end{equation}

The following result upper bounds the trimming values using quantiles. 

\begin{lemma}[\cite{lugosi2021robust}]\label{lemma: Q bounds}
Consider the corruption-free sample $y_1,\ldots,y_t$. With probability at least $1-4e^{-\epsilon_tt/12}$, the inequalities
\begin{align*}
    |\{i: y_i \geq \mu + Q_{1-2\epsilon_t}(\overline{X}\}| &\geq 3/2\epsilon_tt\\
    |\{i: y_i \leq \mu + Q_{1-\epsilon_t/2}(\overline{X}\}| &\geq (1-(3/4)\epsilon_t)t\\
    |\{i: y_i \leq \mu + Q_{2\epsilon_t}(\overline{X}\}| &\geq 3/2\epsilon_tt\\
    |\{i: y_i \geq \mu + Q_{\epsilon_t/2}(\overline{X}\}| &\geq (1-(3/4)\epsilon_t)t
\end{align*} hold simultaneously. We denote the event when the above four inequalities hold as event $A$. On event $A$, since corruption $\eta \leq \epsilon_t/8$, the following inequalities also hold 
\begin{equation}\label{eqn: beta}
    Q_{1-2\epsilon_t}(\overline{X}) \leq \beta_t - \mu \leq Q_{1-\epsilon_t/2}(\overline{X}),
\end{equation}
\begin{equation} \label{eqn: alpha}
    Q_{\epsilon_t/2}(\overline{X}) \leq \alpha_t - \mu\leq Q_{2\epsilon_t}(\overline{X}).
\end{equation} 
\end{lemma}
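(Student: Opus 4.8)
The plan is to prove the lemma in two parts: a probabilistic part establishing the high-probability event $A$ (the four counting inequalities), following the quantile-concentration argument of \cite{lugosi2021robust}, and a deterministic part that carries these inequalities through the at most $\eta t$ adversarial corruptions to the order statistics $\alpha_t$ and $\beta_t$, yielding \eqref{eqn: beta} and \eqref{eqn: alpha}.

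For the probabilistic part, note that each of the four counts is a sum of $t$ i.i.d.\ Bernoulli indicators whose mean is determined exactly by the quantile identity $\mathbb{P}(\overline{X}\ge Q_p(\overline{X}))=1-p$ and, for the ``$\le$'' statements, by absolute continuity of $X$, which upgrades $\mathbb{P}(\overline{X}\le Q_p(\overline{X}))\ge p$ to equality. Thus $\#\{i:y_i\ge\mu+Q_{1-2\epsilon_t}(\overline{X})\}$ and $\#\{i:y_i\le\mu+Q_{2\epsilon_t}(\overline{X})\}$ each have mean $2\epsilon_t t$, and the first and third inequalities ask that they be at least $\tfrac34$ of that mean; while the complements of the second and fourth counts, namely $\#\{i:y_i>\mu+Q_{1-\epsilon_t/2}(\overline{X})\}$ and $\#\{i:y_i<\mu+Q_{\epsilon_t/2}(\overline{X})\}$, each have mean $\tfrac{\epsilon_t}{2}t$ and must be at most $\tfrac32$ of that mean. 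A multiplicative Chernoff bound on the relevant one-sided tail controls each of the four events with failure probability exponentially small in $\epsilon_t t$; the numerical constants in \eqref{eqn: epsilon def} (in particular the $12$) are calibrated so that the union bound over the four events closes at the rate $1-4e^{-\epsilon_t t/12}$, and, since $\epsilon_t t\ge 12\log(4/\delta)$, at rate $1-\delta$ as well.

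For the deterministic part, work on $A$ and use that the adversary replaces at most $\eta t\le\epsilon_t t/8$ of the clean $y_i$ (since $\epsilon_t\ge 8\eta$), so each of the four counts above drops by at most $\epsilon_t t/8$ in passing from $\{y_i\}$ to $\{\tilde{y}_i\}$. For $\beta_t=\tilde{y}^{*}_{(1-\epsilon_t)t}$: the second inequality of $A$ leaves at least $(1-\tfrac34\epsilon_t)t-\tfrac18\epsilon_t t=(1-\tfrac78\epsilon_t)t\ge(1-\epsilon_t)t$ of the $\tilde{y}_i$ at most $\mu+Q_{1-\epsilon_t/2}(\overline{X})$, so the $(1-\epsilon_t)t$-th smallest satisfies $\beta_t\le\mu+Q_{1-\epsilon_t/2}(\overline{X})$; and the first inequality leaves at least $\tfrac32\epsilon_t t-\tfrac18\epsilon_t t=\tfrac{11}{8}\epsilon_t t$ of the $\tilde{y}_i$ at least $\mu+Q_{1-2\epsilon_t}(\overline{X})$, hence at most $(1-\tfrac{11}{8}\epsilon_t)t<(1-\epsilon_t)t$ of them strictly below it, so $\beta_t\ge\mu+Q_{1-2\epsilon_t}(\overline{X})$; this gives \eqref{eqn: beta}. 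Inequalities three and four give \eqref{eqn: alpha} for $\alpha_t=\tilde{y}^{*}_{\epsilon_t t}$ in exactly the same way, using $\tfrac{11}{8}\epsilon_t t\ge\epsilon_t t$ and $\tfrac78\epsilon_t t<\epsilon_t t$.

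The conceptual weight of the argument sits in the probabilistic part, and the delicate point there is to pair each of the four counting statements with the correct one-sided tail of the correct Bernoulli sum (two lower tails around mean $2\epsilon_t t$, two upper tails around mean $\tfrac{\epsilon_t}{2}t$) and to check that the resulting Chernoff exponents are uniformly large enough for the claimed $e^{-\epsilon_t t/12}$ rate, which is exactly what forces the specific constants in \eqref{eqn: epsilon def}; absolute continuity of $X$ enters only to make the ``$\le$'' probabilities exact. The deterministic part is routine but demands exact order-statistic bookkeeping: the one thing to verify is that in every one of the four cases the gap between the clean-sample fraction guaranteed on $A$ (either $\tfrac32\epsilon_t$ or $1-\tfrac34\epsilon_t$) and the trimming fraction (either $\epsilon_t$ or $1-\epsilon_t$) strictly exceeds the corruption fraction $\epsilon_t/8$.
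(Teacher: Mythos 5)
Your proof follows essentially the same route as the paper's source for this lemma (indicator variables whose means are pinned down exactly by absolute continuity, one-sided concentration for each of the four counts, a union bound, and then a deterministic order-statistic argument absorbing the corruptions); in fact you complete the final step---passing from the surviving counts to the bounds on $\alpha_t=\tilde{y}^*_{\epsilon_t t}$ and $\beta_t=\tilde{y}^*_{(1-\epsilon_t)t}$---which the paper only asserts. Two small points to tighten. First, in this paper's setup the adversary corrupts up to $2\eta t$ of the $2t$ samples and the split into $\tilde{X}_t,\tilde{Y}_t$ is arbitrary, so in the worst case $\tilde{Y}_t$ contains up to $2\eta t\le \epsilon_t t/4$ corrupted points, not $\eta t\le \epsilon_t t/8$; your margins still close ($\tfrac32\epsilon_t t-\tfrac14\epsilon_t t\ge\epsilon_t t$ and $(1-\tfrac34\epsilon_t)t-\tfrac14\epsilon_t t=(1-\epsilon_t)t$, the latter exactly), but the bookkeeping should use the larger budget. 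Second, the claim that a multiplicative Chernoff bound "closes at the rate $e^{-\epsilon_t t/12}$" is asserted rather than checked: the standard multiplicative lower-tail bound at $\tfrac34$ of a mean of $2\epsilon_t t$ gives $e^{-\epsilon_t t/16}$, and the upper tail at $\tfrac32$ of a mean of $\tfrac{\epsilon_t}{2}t$ gives roughly $e^{-\epsilon_t t/20}$, both weaker than $e^{-\epsilon_t t/12}$; the cited source obtains the stated constant via Bernstein's inequality, so you should either invoke that form explicitly or accept a slightly worse constant in the exponent (which would then have to be propagated into the definition of $\epsilon_t$).
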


The result below provides insight on the quality of the estimated mean at time $t$.
\begin{lemma}[\cite{lugosi2021robust}]\label{lemma: orig}
Let $\delta \in (0,1)$ and $\delta \geq 4e^{-t}$. Using the trimmed mean estimator, with probability at least $1-\delta$,
\begin{equation}\label{eqn: batch bound}
|\hat{\mu}_{t} - \mu| \leq 3 E(4\epsilon_t,X) + 2\sigma_X\sqrt{\frac{\log(4/\delta)}{t}}.
\end{equation}
\end{lemma}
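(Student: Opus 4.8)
The plan is to replace the \emph{data-dependent} trimming levels $\alpha_t,\beta_t$ by the \emph{deterministic} quantile endpoints furnished by Lemma~\ref{lemma: Q bounds}, after which only a textbook concentration argument remains. I would work throughout on the event $A$ of Lemma~\ref{lemma: Q bounds}: by \eqref{eqn: epsilon def}, $\epsilon_t t/12=\tfrac{2}{3}\eta t+\log(4/\delta)\ge\log(4/\delta)$, so $\mathbb{P}(A)\ge 1-4e^{-\epsilon_t t/12}\ge 1-\delta$, and I would reserve a sliver of this for the concentration steps (at the cost of the absolute constants). Two monotonicity facts get used repeatedly: for fixed $x$ the trim $\phi_{a,b}(x)$ is nondecreasing in each of $a,b$; and $E(s,X)$ is nondecreasing in $s$, so any $E(s,X)$ with $s\le 4\epsilon_t$ is at most $E(4\epsilon_t,X)$. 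I would also note that for $t$ large enough (so $\epsilon_t$ is small, as remarked after \eqref{eqn: epsilon def}) the lower quantiles $Q_{\epsilon_t/2}(\overline X),Q_{2\epsilon_t}(\overline X)$ are $\le 0$ and the upper ones $Q_{1-2\epsilon_t}(\overline X),Q_{1-\epsilon_t/2}(\overline X)$ are $\ge 0$.

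\emph{Reduction to deterministic trimming levels.} Put $\ell_\alpha=\mu+Q_{\epsilon_t/2}(\overline X)$, $u_\alpha=\mu+Q_{2\epsilon_t}(\overline X)$, $\ell_\beta=\mu+Q_{1-2\epsilon_t}(\overline X)$, $u_\beta=\mu+Q_{1-\epsilon_t/2}(\overline X)$, all determined by the law of $X$ alone. On $A$, \eqref{eqn: beta}--\eqref{eqn: alpha} give $\ell_\alpha\le\alpha_t\le u_\alpha$ and $\ell_\beta\le\beta_t\le u_\beta$, so monotonicity of $\phi$ in its trimming levels yields
\[\frac{1}{t}\sum_{j=1}^t\phi_{\ell_\alpha,\ell_\beta}(\tilde x_j)\ \le\ \hat\mu_t\ \le\ \frac{1}{t}\sum_{j=1}^t\phi_{u_\alpha,u_\beta}(\tilde x_j).\]
It then suffices to show each bounding average is within $2E(4\epsilon_t,X)+2\sigma_X\sqrt{\log(4/\delta)/t}$ of $\mu$; I would treat the upper one, the lower being symmetric. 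Writing $x_1,\dots,x_t$ for the i.i.d.\ copies of $X$ underlying $\tilde x_1,\dots,\tilde x_t$ (at most $2\eta t$ of them altered), I would split
\[\frac{1}{t}\sum_j\phi_{u_\alpha,u_\beta}(\tilde x_j)-\mu=(\mathrm{I})+(\mathrm{II})+(\mathrm{III}),\qquad (\mathrm{I})=\frac{1}{t}\sum_j\bigl(\phi_{u_\alpha,u_\beta}(\tilde x_j)-\phi_{u_\alpha,u_\beta}(x_j)\bigr),\]
with $(\mathrm{II})=\frac{1}{t}\sum_j\bigl(\phi_{u_\alpha,u_\beta}(x_j)-\mathbb{E}\phi_{u_\alpha,u_\beta}(X)\bigr)$ and $(\mathrm{III})=\mathbb{E}\phi_{u_\alpha,u_\beta}(X)-\mu$.

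\emph{Bounding the three pieces.} I expect $(\mathrm{I}),(\mathrm{III})$ each $\le E(4\epsilon_t,X)$ and $(\mathrm{II})\le 2\sigma_X\sqrt{\log(4/\delta)/t}$. For $(\mathrm{I})$: the summand is nonzero only on the $\le 2\eta t$ altered coordinates, where it is bounded by the window width $u_\beta-u_\alpha$, so $|(\mathrm{I})|\le 2\eta(u_\beta-u_\alpha)$; combining $2\eta\le\epsilon_t/4$ (from \eqref{eqn: epsilon def}) with the elementary tail estimates $\tfrac{\epsilon_t}{2}|Q_{2\epsilon_t}(\overline X)|\le E(4\epsilon_t,X)$ and $\tfrac{\epsilon_t}{2}Q_{1-\epsilon_t/2}(\overline X)\le E(4\epsilon_t,X)$ (each obtained by noting $|\overline X|$ exceeds the quantile on the relevant truncation event and invoking \eqref{eqn: E}) gives $|(\mathrm{I})|\le E(4\epsilon_t,X)$. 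For $(\mathrm{III})$: decompose $\phi_{u_\alpha,u_\beta}(X)-X=(u_\beta-X)\mathbf{1}_{X>u_\beta}+(u_\alpha-X)\mathbf{1}_{X<u_\alpha}$ and bound each truncated expectation directly by the matching term of \eqref{eqn: E} at argument $4\epsilon_t$. For $(\mathrm{II})$: the $\phi_{u_\alpha,u_\beta}(x_j)$ are i.i.d., lie in $[u_\alpha,u_\beta]$, and have variance at most $\mathbb{E}[(\phi_{u_\alpha,u_\beta}(X)-\mu)^2]\le\sigma_X^2$, so Bernstein's inequality gives $|(\mathrm{II})|\le\sigma_X\sqrt{2\log(4/\delta)/t}+(u_\beta-u_\alpha)\log(4/\delta)/(3t)$ with probability $\ge 1-2e^{-\log(4/\delta)}$, and the second term is absorbed using $u_\beta-u_\alpha\le 2\sigma_X\sqrt{2/\epsilon_t}$ from \eqref{eqn: upper bound on Q}, $\epsilon_t\ge 12\log(4/\delta)/t$, and $\delta\ge 4e^{-t}$. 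Adding the three bounds, repeating symmetrically for the lower average, and union-bounding over $A$ and the two Bernstein events yields $|\hat\mu_t-\mu|\le 2E(4\epsilon_t,X)+2\sigma_X\sqrt{\log(4/\delta)/t}$, which in particular gives \eqref{eqn: batch bound}.

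\emph{Main obstacle.} The only real difficulty is that $\alpha_t,\beta_t$ are built from the adversarially corrupted data, so one cannot apply concentration to $\frac{1}{t}\sum_j\phi_{\alpha_t,\beta_t}(\tilde x_j)$ directly; the monotone sandwich by the deterministic levels $\ell_\bullet,u_\bullet$ is exactly the device that decouples the trimming from the averaging, and once it is in place the rest is the bookkeeping in $(\mathrm{I})$ and $(\mathrm{III})$ that trades the corruption budget $2\eta\le\epsilon_t/4$ and the variance proxy in \eqref{eqn: upper bound on Q} for multiples of $E(4\epsilon_t,X)$, plus the routine splitting of the confidence level between Lemma~\ref{lemma: Q bounds} and the two uses of Bernstein's inequality.
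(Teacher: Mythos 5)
Your proof follows essentially the same route as the paper's own argument (given in the Appendix for Theorem~\ref{thm: online 1}, which the authors state mirrors the proof of this lemma): work on the event $A$ of Lemma~\ref{lemma: Q bounds}, sandwich the data-dependent trimming thresholds between the deterministic quantile levels, and split the error into a corruption term, a bias term, and a Bernstein concentration term, each bounded exactly as you describe (your accounting even yields $2E(4\epsilon_t,X)$ where the paper settles for $3E(4\epsilon_t,X)$ by keeping the Bernstein remainder as a separate $E$-term). The only slack, which you explicitly acknowledge, is the union-bound bookkeeping of the confidence level across $A$ and the two Bernstein events; the paper's version has the same looseness and it affects only absolute constants.
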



\section{Online Robust Mean Estimation} \label{sec: online}
In the previous section, we provided an overview of the trimmed mean estimator for batch data.  In this section, we provide update algorithms for each agent to estimate the mean from a {\it sequence of data arriving in real-time}. 

\subsection{Fixed Trimming Thresholds}
We assume that each agent starts with a set of potentially $\eta$-corrupted data containing all the data available to the agent at time $t_0 \geq 1$, denoted as $\tilde{Y}_{i,t_0} = \tilde{X}_{i,t_0} =  \{\tilde{x}_{i,1}, \ldots, \tilde{x}_{i,t_0}\}$. At time $t = t_0$, each agent estimates the trimming thresholds $\alpha_{i,t_0}$ and $\beta_{i,t_0}$ following the procedures described in Section \ref{sec: background}. For all $t > t_0$, each agent starts the estimation process recursively using the previous estimates, 
\begin{equation}\label{eqn: mu local update 1}
    \hat{\mu}_{i,t} = \frac{(t-t_0-1)\hat{\mu}_{i,t-1} + \phi_{\alpha_{i,t_0},\beta_{i,t_0}}(\tilde{x}_{i,t})}{t - t_0}.
\end{equation}
We include the pseudo-code implementation for the online update in Algorithm \ref{algo: online mean 1}. 
\begin{algorithm}[ht]
\SetKwInOut{Initialization}{Initialization}
\SetKwInOut{Parameter}{Parameter}
\Initialization{$\hat{\mu}_{i,t_0} = 0$}
\KwIn{$t_0, \delta, \eta$, a sequence of corrupted data $\tilde{x}_{i,t}$}
\For{$t \leq t_0$}{
Store data in $\tilde{Y}_{i,t}$\\
\If{$t = t_0$}{
Compute $\epsilon_{t_0}$ as per \eqref{eqn: epsilon def}\\
Sort data in $\tilde{Y}_{i,t_0}$ and determine trimming thresholds $\alpha_{i,t_0}, \beta_{i,t_0}$}}
\For{$t > t_0$}{
Update mean estimate as     
$\hat{\mu}_{i,t} = \frac{(t-t_0-1)\hat{\mu}_{i,t-1} + \phi_{\alpha_{i,t_0},\beta_{i,t_0}}(\tilde{x}_{i,t})}{t-t_0}$
}
\caption{Online Robust Mean Estimation - Fixed Trimming Thresholds}
\label{algo: online mean 1}
\end{algorithm}

We provide the following results that guarantee the quality of the estimates. 
\begin{theorem}\label{thm: online 1}
Let $\delta \in (0,1)$ be s.t. $\delta \geq 4e^{-(t-t_0)}, \forall t \geq 2t_0$. Following the procedures of Algorithm \ref{algo: online mean 1}, with probability at least $1-\delta$, the estimates of each agent $i\in\mathcal{V}$ satisfy
\begin{equation}\label{eqn: mu bound algo 1}
|\hat{\mu}_{i,t}  - \mu| \leq 3E(4\epsilon_{t_0},X) + 2\sigma_X\sqrt{\frac{\log(4/\delta)}{t-t_0}} ,\forall t \geq 2t_0.  
\end{equation}
\end{theorem}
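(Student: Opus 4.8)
The plan is to reduce Algorithm~\ref{algo: online mean 1} to the batch estimator of Section~\ref{sec: background}, keeping careful track of the (now unequal) sizes of the ``threshold'' and ``estimation'' sets. First I would unroll the recursion~\eqref{eqn: mu local update 1}: a one-line induction on $t$, using $\hat\mu_{i,t_0}=0$ and the identity $t-t_0-1=(t-1)-t_0$, gives, for every $t>t_0$,
\[
\hat\mu_{i,t}=\frac{1}{t-t_0}\sum_{j=t_0+1}^{t}\phi_{\alpha_{i,t_0},\beta_{i,t_0}}(\tilde{x}_{i,j}).
\]
Hence $\hat\mu_{i,t}$ is exactly the trimmed mean estimator of Section~\ref{sec: background} with trimming values $\alpha_{i,t_0},\beta_{i,t_0}$ built from the threshold set $\tilde{Y}_{i,t_0}$ of cardinality $t_0$, and the average taken over the estimation set $\{\tilde{x}_{i,t_0+1},\ldots,\tilde{x}_{i,t}\}$ of cardinality $t-t_0$. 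The only new feature relative to Lemma~\ref{lemma: orig} is that these two sets no longer share the same cardinality, so I cannot invoke Lemma~\ref{lemma: orig} verbatim and will instead re-run its proof.

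Next I would record the two corruption fractions. At time $t_0$ at most $\eta t_0$ of the $t_0$ entries of $\tilde{Y}_{i,t_0}$ are corrupted, a fraction $\eta\le\epsilon_{t_0}/8$, which is precisely the hypothesis of Lemma~\ref{lemma: Q bounds} with parameter $\epsilon_{t_0}$; applying it produces the event $A$ on which $\alpha_{i,t_0},\beta_{i,t_0}$ obey the quantile sandwich~\eqref{eqn: beta}--\eqref{eqn: alpha}, and since $\epsilon_{t_0}t_0/12\ge\log(4/\delta)$ and $\delta\ge 4e^{-t_0}$ one verifies that $A$ holds with the required probability (a portion of the $\delta$-budget being set aside for the fluctuation bound below, exactly as in the proof of Lemma~\ref{lemma: orig}). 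Because the counting inequalities defining $A$ depend only on the clean samples $x_{i,1},\ldots,x_{i,t_0}$, the event $A$ is independent of the clean samples indexed by $j>t_0$. Finally, for $t\ge 2t_0$ the estimation set has size $t-t_0\ge t/2$ while at most $\eta t$ of all $t$ samples are corrupted, so the estimation set's corruption fraction is at most $\eta t/(t-t_0)\le 2\eta\le\epsilon_{t_0}/4$.

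Working on $A$, I would then bound $|\hat\mu_{i,t}-\mu|$ following~\cite{lugosi2021robust}. By monotonicity of $\phi_{\alpha,\beta}$ in its clip levels, the sandwich~\eqref{eqn: beta}--\eqref{eqn: alpha} lets me replace the data-dependent thresholds $\alpha_{i,t_0},\beta_{i,t_0}$ by the deterministic population-quantile endpoints, reducing the problem to a trimmed average with fixed clip levels over the estimation set. Swapping each of that set's at most $\eta t\le 2\eta(t-t_0)$ corrupted entries for the corresponding uncorrupted sample costs at most $2\eta\bigl(Q_{1-\epsilon_{t_0}/2}(\overline{X})-Q_{\epsilon_{t_0}/2}(\overline{X})\bigr)$ by~\eqref{eqn: upper bound on Q}; the residual bias of the resulting clean truncated average away from $\mu$ is, as in~\cite{lugosi2021robust} and via~\eqref{eqn: upper of E}, at most $3E(4\epsilon_{t_0},X)$, the factor $4$ providing the slack needed for the quantile sandwich and for the enlarged corruption fraction $2\eta\le\epsilon_{t_0}/4$; and the fluctuation of that average---an average of $t-t_0$ i.i.d.\ bounded terms of variance at most $\sigma_X^2$, independent of $A$---about its mean is at most $2\sigma_X\sqrt{\log(4/\delta)/(t-t_0)}$ by a Bernstein bound. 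Summing the three contributions gives~\eqref{eqn: mu bound algo 1}. I expect the main obstacle to be precisely this last accounting: confirming that the doubled estimation-set corruption ($2\eta$ versus $\eta$) and the size mismatch ($t_0$ versus $t-t_0$) are absorbed without enlarging the constant $4$ inside $E$. A secondary point is the quantifier ``$\forall t\ge 2t_0$'', which is harmless because the thresholds are frozen at time $t_0$ so the only residual $t$-dependence is in the running average; the bound thus holds pointwise in $t$, and upgrades to a uniform statement at the price of an extra logarithmic factor via a union bound over dyadic ranges of $t-t_0$.
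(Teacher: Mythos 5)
Your proposal follows essentially the same route as the paper's proof: unroll the recursion to see that $\hat{\mu}_{i,t}$ is the batch trimmed mean with thresholds frozen from the size-$t_0$ set $\tilde{Y}_{i,t_0}$ and averaged over the size-$(t-t_0)$ estimation window, invoke Lemma~\ref{lemma: Q bounds} to sandwich $\alpha_{i,t_0},\beta_{i,t_0}$ between population quantiles, apply Bernstein's inequality to the clean truncated average, and charge the at most $\eta t\le 2\eta(t-t_0)$ corrupted points in the window separately. The one place your write-up does not literally close is the final accounting: you allocate the full $3E(4\epsilon_{t_0},X)$ to the bias of the clean truncated average and then add the corruption-swap cost $2\eta\bigl(Q_{1-\epsilon_{t_0}/2}(\overline{X})-Q_{\epsilon_{t_0}/2}(\overline{X})\bigr)$ on top, which as written overshoots the stated bound. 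In the paper's version the clean average contributes $2E(4\epsilon_{t_0},X)$ (one $E$ from the truncation bias, one absorbing the linear Bernstein term via $\delta\ge 4e^{-(t-t_0)}$), and the swap cost supplies the third $E$: using $\eta\, t/(t-t_0)\le 2\eta\le\epsilon_{t_0}/4$ one bounds it by $\tfrac{\epsilon_{t_0}}{2}\max\{|Q_{\epsilon_{t_0}/2}(\overline{X})|,|Q_{1-\epsilon_{t_0}/2}(\overline{X})|\}\le\mathbb{E}[|\overline{X}|\mathbf{1}_{\overline{X}\ge Q_{1-\epsilon_{t_0}/2}(\overline{X})}]\le E(\epsilon_{t_0},X)\le E(4\epsilon_{t_0},X)$, rather than by the cruder quantile-spread estimate from \eqref{eqn: upper bound on Q}. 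With that reallocation your argument is the paper's proof; your observation that the ``$\forall t\ge 2t_0$'' quantifier is established only pointwise in $t$ is also correct and applies equally to the paper's own argument.
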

The proof of Theorem \ref{thm: online 1} is similar to Lemma \ref{lemma: orig} (batch algorithm). We include the proof in the Appendix for completeness. 

Using the definition of $\epsilon_t$ in \eqref{eqn: epsilon def} and property \eqref{eqn: upper of E} with Theorem \ref{thm: online 1}, we obtain the result below on the convergence of the estimates.
\begin{corollary}
Let $\delta \in (0,1)$ be s.t. $\delta \geq 4e^{-(t-t_0)}, \forall t \geq 2t_0$. Following the procedures of Algorithm \ref{algo: online mean 1}, for any $\xi >0$, there is a sufficiently large $\bar{t}$, s.t. $\forall t \geq \bar{t}$, the estimates of each agent $i\in\mathcal{V}$ satisfy the following inequality with probability at least $1-\delta$, 
\begin{equation}
|\hat{\mu}_{i,t}  - \mu| \leq 24\sigma_X\sqrt{4\eta + 6\frac{\log(4/\delta)}{t_0}} + \xi.   
\end{equation}
\end{corollary}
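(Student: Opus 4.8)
The plan is to derive the corollary directly from Theorem \ref{thm: online 1} by bounding each of the two terms on the right-hand side of \eqref{eqn: mu bound algo 1} separately, exploiting the fact that $t \geq t_0$ together with $t \to \infty$. First I would handle the estimation-error term $3E(4\epsilon_{t_0},X)$. Using property \eqref{eqn: upper of E} with $\epsilon_{t_0}$ replaced by $4\epsilon_{t_0}$, we get $E(4\epsilon_{t_0},X) \leq \sigma_X\sqrt{32\epsilon_{t_0}}$, and then substituting the definition \eqref{eqn: epsilon def} gives $\epsilon_{t_0} = 8\eta + 12\log(4/\delta)/t_0$, so that $32\epsilon_{t_0} = 256\eta + 384\log(4/\delta)/t_0 = 64\bigl(4\eta + 6\log(4/\delta)/t_0\bigr)$. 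Hence $3E(4\epsilon_{t_0},X) \leq 3\sigma_X \cdot 8\sqrt{4\eta + 6\log(4/\delta)/t_0} = 24\sigma_X\sqrt{4\eta + 6\log(4/\delta)/t_0}$, which is exactly the leading term in the claimed bound. Note this term does not depend on $t$ at all, only on the fixed initialization time $t_0$; that is the whole point of the fixed-threshold algorithm.

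Next I would dispose of the second term $2\sigma_X\sqrt{\log(4/\delta)/(t-t_0)}$. Since $t_0$ is a fixed constant and $\delta$ is fixed, as $t \to \infty$ we have $t - t_0 \to \infty$, so $2\sigma_X\sqrt{\log(4/\delta)/(t-t_0)} \to 0$. Therefore, for any $\xi > 0$, there exists $\bar{t} \geq 2t_0$ large enough that $2\sigma_X\sqrt{\log(4/\delta)/(t-t_0)} \leq \xi$ for all $t \geq \bar{t}$. Concretely one can take $\bar{t} = \max\{2t_0,\ t_0 + 4\sigma_X^2\log(4/\delta)/\xi^2\}$. Combining the two bounds, for all $t \geq \bar{t}$, with probability at least $1-\delta$,
\[
|\hat{\mu}_{i,t} - \mu| \leq 24\sigma_X\sqrt{4\eta + 6\frac{\log(4/\delta)}{t_0}} + \xi,
\]
as claimed. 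The requirement $\delta \geq 4e^{-(t-t_0)}$ for all $t \geq 2t_0$ is inherited directly from Theorem \ref{thm: online 1} and needs no separate treatment, since it is already part of the hypothesis.

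There is essentially no hard part here: the corollary is a routine asymptotic simplification of Theorem \ref{thm: online 1}, and the only thing to be careful about is the arithmetic inside the square root — tracking the constant $64$ that lets $3\sigma_X\sqrt{32\epsilon_{t_0}}$ collapse to $24\sigma_X\sqrt{4\eta + 6\log(4/\delta)/t_0}$. The one conceptual point worth stating explicitly in the write-up is \emph{why} the first term is frozen at $\epsilon_{t_0}$ rather than improving with $t$: because Algorithm \ref{algo: online mean 1} fixes the trimming thresholds $\alpha_{i,t_0},\beta_{i,t_0}$ at time $t_0$ and never updates them, the bias contribution $3E(4\epsilon_{t_0},X)$ is irreducible, and only the variance-like term benefits from additional samples. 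I would keep the proof to a few lines, presenting the substitution into \eqref{eqn: upper of E}, the regrouping of constants, and the limit argument for the second term.
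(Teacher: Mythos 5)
Your proposal is correct and is precisely the argument the paper intends: it states the corollary follows from the definition of $\epsilon_t$ in \eqref{eqn: epsilon def}, the bound \eqref{eqn: upper of E}, and Theorem \ref{thm: online 1}, and omits further details. Your arithmetic ($3\sigma_X\sqrt{32\epsilon_{t_0}} = 24\sigma_X\sqrt{4\eta + 6\log(4/\delta)/t_0}$) and the limit argument for the second term check out, and your explicit choice of $\bar{t}$ is a useful addition.
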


From the above results, we observe that the estimation error converges to the error introduced by initialization, i.e., $3E(4\epsilon_{t_0},X)$. Thus, it requires a relatively large $t_0$ to give reasonable estimates.
However, Algorithm \ref{algo: online mean 1} is suitable when the memory of agent is limited, as the algorithm only requires $O(1)$ memory. If each agent has sufficient memory, maintaining a sorted list of data $\tilde{Y}_t$ to update the trimming thresholds $\alpha_t$ and $\beta_t$ as new data arrives has advantages since the initial errors will converge to zero as data continuously stream in. 

\subsection{Updated Trimming Thresholds}
In this subsection, we propose an alternative online algorithm that updates the trimming thresholds $\alpha_t$ and $\beta_t$ as new data point arrives, improving the performance of the trimming operator while sacrificing $O(t)$ memory and $O(\log t)$ computation. 
In comparison, the batch algorithm \cite{lugosi2021robust} considers static $\alpha$ and $\beta$ with a given dataset and the proposed online Algorithm \ref{algo: online mean 1} still uses static trimming values determined at $t_0$ but updates the mean recursively as data stream in. 

We first describe the online algorithm and subsequently analyze the convergence properties and the trade-offs between the online and the batch algorithm. 

At time step $2t$, the samples of each agent are arbitrarily divided into two sets, denoted $\tilde{X}_{i,t} = \{\tilde{x}_{i,1},\ldots,\tilde{x}_{i,t}\}$ and $\tilde{Y}_{i,t} = \{\tilde{y}_{i,1}, \ldots, \tilde{y}_{i,t}\}$. In order to simplify the notation and discussion, we assume two data points $\tilde{x}_{i,t}$ and $\tilde{y}_{i,t}$ arrive at each time step $t$.

Each agent will store $\tilde{Y}_{i,t}$ as a binary search tree (BST), denoted BST($\tilde{Y}_{i,t}$). 
At each time step $t$, each agent updates $\epsilon_{t}$ following \eqref{eqn: epsilon def} and updates the trimming thresholds $\alpha_{i,t} = \tilde{y}_{i,\epsilon_t t}^*$ and $\beta_{i,t} = \tilde{y}_{i,(1-\epsilon_t)t}^*$ in $O(\ln t)$ complexity with the BST data structure. 
Subsequently, each agent performs a recursive local update using its previous estimates $\hat{\mu}_{i,t-1}$ and applying the new trimming thresholds $\alpha_{i,t},\beta_{i,t}$ on data $\tilde{x}_{i,t}$:
\begin{equation}\label{eqn: mu local update}
    \hat{\mu}_{i,t} = \frac{(t-1)\hat{\mu}_{i,t-1} + \phi_{\alpha_{i,t},\beta_{i,t}}(\tilde{x}_{i,t})}{t}.
\end{equation}
The pseudo-code is provided in Algorithm \ref{algo: online mean 2} below.
\begin{algorithm}[ht]
\SetKwInOut{Initialization}{Initialization}
\SetKwInOut{Parameter}{Parameter}
\Initialization{$\hat{\mu}_{i,0} = 0$}
\KwIn{$\delta$, $\eta$, a sequence of corrupted data $\tilde{x}_{i,t}$ and $\tilde{y}_{i,t}$}
\BlankLine
\For{$t \in \mathbb{N}$}{
Update BST($\tilde{Y}_{i,t}$) with $\tilde{y}_{i,t}$\\
Compute $\epsilon_t$ as per \eqref{eqn: epsilon def}\\ Update $\alpha_{i,t}$ and $\beta_{i,t}$\\
$\hat{\mu}_{i,t} = \frac{(t-1)\hat{\mu}_{i,t-1} + \phi_{\alpha_{i,t},\beta_{i,t}}(\tilde{x}_{i,t})}{t}$
}
\caption{Online Robust Mean Estimation - Updated Trimming Thresholds}
\label{algo: online mean 2}
\end{algorithm}

We provide the following results that guarantee the quality of the estimates of Algorithm \ref{algo: online mean 2}. 

\begin{theorem} \label{thm: online 2}
Following the procedures of Algorithm \ref{algo: online mean 2}, for all sample paths in a set of measure 1, there exists a finite time $\bar{t}$, such that for all $t \geq \bar{t}$, the error between the estimated mean and the true mean of any agent $i\in\mathcal{V}$, denoted as $|\hat{\mu}_{i,t} - \mu|$, satisfies the error bound
\begin{equation}\label{eqn: algo 2 error bound}
    |\hat{\mu}_{i,t} - \mu| \leq \frac{\bar{t}}{t} |\hat{\mu}_{i,\bar{t}} -\mu| + \frac{1}{t}\sum_{j = \bar{t}+1}^t\frac{\sigma_X}{\sqrt{4\eta + 6\frac{\log(4/\delta)}{j}}}.
\end{equation}
\end{theorem}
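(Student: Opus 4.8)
The plan is to unroll the recursion \eqref{eqn: mu local update} and then bound the resulting average of trimmed samples term by term. Writing $S_t = t\hat{\mu}_{i,t}$, the update reads $S_t = S_{t-1} + \phi_{\alpha_{i,t},\beta_{i,t}}(\tilde{x}_{i,t})$, so that for any $\bar{t} < t$,
\[
t\hat{\mu}_{i,t} = \bar{t}\,\hat{\mu}_{i,\bar{t}} + \sum_{j=\bar{t}+1}^{t}\phi_{\alpha_{i,j},\beta_{i,j}}(\tilde{x}_{i,j}).
\]
Subtracting $t\mu = \bar{t}\mu + \sum_{j=\bar{t}+1}^{t}\mu$ and applying the triangle inequality yields
\[
|\hat{\mu}_{i,t}-\mu| \le \frac{\bar{t}}{t}\,|\hat{\mu}_{i,\bar{t}}-\mu| + \frac{1}{t}\sum_{j=\bar{t}+1}^{t}\bigl|\phi_{\alpha_{i,j},\beta_{i,j}}(\tilde{x}_{i,j})-\mu\bigr|,
\]
so it suffices to exhibit, on a full-measure set of sample paths, a finite $\bar{t}$ past which every summand satisfies $\bigl|\phi_{\alpha_{i,j},\beta_{i,j}}(\tilde{x}_{i,j})-\mu\bigr| \le \sigma_X/\sqrt{4\eta + 6\log(4/\delta)/j}$.

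To bound a single summand, note that by definition $\phi_{\alpha,\beta}(x)\in[\alpha,\beta]$ for every $x$, hence $\bigl|\phi_{\alpha_{i,j},\beta_{i,j}}(\tilde{x}_{i,j})-\mu\bigr|\le\max\{|\alpha_{i,j}-\mu|,\,|\beta_{i,j}-\mu|\}$ regardless of how the adversary corrupts $\tilde{x}_{i,j}$, so it remains only to control the two trimming thresholds. This is exactly what Lemma \ref{lemma: Q bounds} gives: on the event $A_j$ (the four order-statistic inequalities at time $j$ for the uncorrupted samples), and using $\eta\le\epsilon_j/8$, the bounds \eqref{eqn: beta}–\eqref{eqn: alpha} sandwich $\beta_{i,j}-\mu$ between $Q_{1-2\epsilon_j}(\overline{X})$ and $Q_{1-\epsilon_j/2}(\overline{X})$, and $\alpha_{i,j}-\mu$ between $Q_{\epsilon_j/2}(\overline{X})$ and $Q_{2\epsilon_j}(\overline{X})$. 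Feeding these into \eqref{eqn: upper bound on Q} (together with its mirror image for the lower quantiles, obtained by the same Chebyshev argument applied to $-\overline{X}$) bounds both $|\alpha_{i,j}-\mu|$ and $|\beta_{i,j}-\mu|$ by $\sigma_X\sqrt{2/\epsilon_j}$; since $\epsilon_j = 8\eta + 12\log(4/\delta)/j$ we have $\sqrt{2/\epsilon_j} = 1/\sqrt{4\eta + 6\log(4/\delta)/j}$, precisely the summand appearing in \eqref{eqn: algo 2 error bound}.

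The remaining step is to pass from the per-time event $A_j$ to a statement valid for all large $j$ simultaneously, and this is where the path-wise ``measure one'' phrasing (rather than a fixed confidence $1-\delta$) comes from. Lemma \ref{lemma: Q bounds} gives $\mathbb{P}(A_j^{\,c}) \le 4e^{-\epsilon_j j/12}$, and since $\epsilon_j \ge 8\eta$ this is at most $4e^{-2\eta j/3}$, so $\sum_{j\ge 1}\mathbb{P}(A_j^{\,c})<\infty$; by the first Borel--Cantelli lemma, almost surely only finitely many $A_j$ fail. Choosing $\bar{t}$ to exceed the last such failure — and also large enough that $\epsilon_j<1/2$, which is possible because $\epsilon_j\downarrow 8\eta<1/2$ — makes the per-summand bound hold for every $j\ge\bar{t}+1$, and substituting into the displayed inequality gives \eqref{eqn: algo 2 error bound}. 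The main obstacle is precisely this last step: the events $A_j$ are all built from a single data stream and are highly dependent, so the argument must rely on the fact that Borel--Cantelli I requires no independence, and it must accept that $\bar{t}$ is itself random, which is why the theorem is stated sample-path-wise with a finite-time cutoff rather than with an explicit confidence. A minor technical wrinkle is that $\epsilon_j j$ need not be an integer in the order statistics $\tilde{y}^{\,*}_{i,\epsilon_j j}$, which is handled by the rounding convention already implicit in Section \ref{sec: background}.
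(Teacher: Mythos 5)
Your proof is correct and follows essentially the same route as the paper's: split the average at an almost-surely finite time $\bar{t}$ obtained from Borel--Cantelli applied to the failure events of Lemma \ref{lemma: Q bounds}, and bound every trimmed summand after $\bar{t}$ by $\sigma_X\sqrt{2/\epsilon_j}=U_j$ via the quantile sandwich \eqref{eqn: beta}--\eqref{eqn: alpha} and the Chebyshev bound \eqref{eqn: upper bound on Q}. The only cosmetic difference is that you control $|\phi_{\alpha_{i,j},\beta_{i,j}}(\tilde{x}_{i,j})-\mu|$ summand-by-summand while the paper bounds the partial sum from above and below separately; these are equivalent.
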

\begin{proof}
To simplify notation, we omit the subscript $i$ denoting the agent in the proof. Let $\phi_t = \phi_{\alpha_t,\beta_t}$ denote the updated trimming operator at time step $t \in \mathbb{N}$. 

Recall that the inequalities in Lemma \ref{lemma: Q bounds} hold simultaneously with probability at least $1-4\exp\{-\epsilon_t t/12\}, \forall t \in \mathbb{N}$.
Define a bad event at time $t$ to be the event that any inequalities in \eqref{eqn: beta} or \eqref{eqn: alpha} do not hold. Define the random variable $B_t$, with $B_t = 1$ if the bad event occurs at the given time $t$, and 0 otherwise. 
Let $B = \sum_{j = 1}^t B_j$ be the number of bad events up to time $t$. Summing up the probability of bad events, we have 
\begin{equation}
    \sum_{j = 1}^t \mathbb{P}[B_j] \leq 4\sum_{j = 1}^t e^{-\epsilon_j j/12}.
\end{equation}
As $t \to \infty$, it can be shown that the series is convergent, i.e., $\sum_{j = 1}^{\infty} \mathbb{P}[B_j] < \infty$. From the Borel-Cantelli lemma, the probability of infinitely many bad events occurring is 0. Thus, for a set of sample paths of measure 1, there exists a sample path dependent finite time $\Bar{t}$ such that no more bad events occur for $t \geq \bar{t}$. 

Rewriting the update for Algorithm \ref{algo: online mean 2} with finite time $\bar{t}$, the deviation between the estimated mean and the true mean can be expressed as
\begin{multline}\label{eqn: algo2 intermediate}
|\hat{\mu}_t - \mu| = |\frac{1}{t}(\sum_{j = 1}^{\bar{t}} \phi_{j}(\tilde{x}_j) + \sum_{j = \bar{t}+1}^t \phi_{j}(\tilde{x}_j)) - \mu| \\
\leq |\frac{1}{t} \sum_{j = 1}^{\bar{t}} \phi_{j}(\tilde{x}_j) - \frac{\bar{t}}{t} \mu| + |\frac{1}{t}\sum_{j = \bar{t}+1}^t \phi_{j}(\tilde{x}_j)- \frac{t-\bar{t}}{t}\mu |.  
\end{multline}

Since \eqref{eqn: beta} holds for all $t\geq \bar{t}$, from \eqref{eqn: upper bound on Q} and \eqref{eqn: epsilon def}, we have
$
\phi_t(\tilde{x}_t) \leq \beta_t \leq Q_{1-\epsilon_t/2} + \mu 
\leq \frac{\sigma_X \sqrt{2}}{\sqrt{\epsilon_t}} + \mu = \frac{\sigma_X}{\sqrt{4\eta + 6\frac{\log(4/\delta)}{t}}} + \mu.
$
Let $U_t = \frac{\sigma_X}{\sqrt{4\eta + 6\frac{\log(4/\delta)}{t}}}$. We can see that $U_{t-1}\leq U_t \leq \frac{\sigma_X}{\sqrt{4\eta}}$.
Using $U_t$, we can provide an upper bound
\begin{align*}
\frac{1}{t}\sum_{j = \bar{t}+1}^t \phi_{j}(\tilde{x}_j) \leq \frac{1}{t}\sum_{j = \bar{t}+1}^t (U_j + \mu) = \frac{1}{t}\sum_{j = \bar{t}+1}^t U_j + \frac{t-\bar{t}}{t}\mu.
\end{align*} Similarly with the lower tail, 
$
\phi_t(\tilde{x}_t) \geq \alpha_t \geq Q_{\epsilon_t/2} + \mu 
\geq -\frac{\sigma_X }{\sqrt{1 - \epsilon_t/2}} + \mu \geq - \frac{\sigma_X \sqrt{2}}{\sqrt{\epsilon_t}} + \mu = -U_t + \mu.
$
Thus, we have
$
\frac{1}{t}\sum_{j = \bar{t}+1}^t \phi_{j}(\tilde{x}_j) \geq -\frac{1}{t}\sum_{j = \bar{t}+1}^t U_j + \frac{t-\bar{t}}{t}\mu.    
$

\end{proof}
The above result provides a closed-form upper bound on the quality of the estimates from Algorithms \ref{algo: online mean 2}. We can observe from the below corollary, that deviation of the estimate and the true mean is indeed bounded as $t \to \infty$.
\begin{corollary}\label{coro: algo 2 conv}
Following the procedures of Algorithm \ref{algo: online mean 2}, for all sample paths in a set of measure 1, as $t\to\infty$, the estimates of each agent satisfy
\begin{equation}\label{eqn: mu bound algo 2}
\limsup_{t\to\infty}|\hat{\mu}_{i,t}  - \mu| \leq \frac{\sigma_X}{2\sqrt{\eta}}.    
\end{equation}
\end{corollary}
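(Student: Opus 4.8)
The plan is to pass to the limit $t\to\infty$ in the error bound \eqref{eqn: algo 2 error bound} established in Theorem \ref{thm: online 2}. Fix a sample path in the measure-one set guaranteed by that theorem, so that the associated finite time $\bar t$ exists, and recall that the bound decomposes the error into an \emph{initialization term} $\frac{\bar t}{t}\,|\hat\mu_{i,\bar t}-\mu|$ and an \emph{averaging term} $\frac1t\sum_{j=\bar t+1}^{t}U_j$, where $U_j=\frac{\sigma_X}{\sqrt{4\eta+6\log(4/\delta)/j}}$ as in the proof of Theorem \ref{thm: online 2}. Since $\bar t$ and $|\hat\mu_{i,\bar t}-\mu|$ are fixed finite quantities along this sample path, the initialization term tends to $0$ as $t\to\infty$.

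It then remains to evaluate the limit of the averaging term. From the proof of Theorem \ref{thm: online 2}, $U_j$ is non-decreasing in $j$, satisfies $U_j\le\frac{\sigma_X}{\sqrt{4\eta}}=\frac{\sigma_X}{2\sqrt\eta}$, and $U_j\to\frac{\sigma_X}{2\sqrt\eta}$ as $j\to\infty$. A Ces\`aro (equivalently, Stolz--Ces\`aro) argument then yields $\frac1t\sum_{j=\bar t+1}^{t}U_j\to\frac{\sigma_X}{2\sqrt\eta}$: given $\xi>0$, choose $T_\xi$ with $U_j\ge\frac{\sigma_X}{2\sqrt\eta}-\xi$ for all $j\ge T_\xi$; splitting the sum at $T_\xi$ and using $0\le U_j\le\frac{\sigma_X}{2\sqrt\eta}$ shows that for all sufficiently large $t$ the average lies in $\bigl[\frac{\sigma_X}{2\sqrt\eta}-2\xi,\ \frac{\sigma_X}{2\sqrt\eta}\bigr]$, and $\xi>0$ is arbitrary. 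Combining the two limits and taking $\limsup$ on both sides of \eqref{eqn: algo 2 error bound} gives $\limsup_{t\to\infty}|\hat\mu_{i,t}-\mu|\le\frac{\sigma_X}{2\sqrt\eta}$, which is the claim.

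I do not anticipate a genuine obstacle: once Theorem \ref{thm: online 2} is available, the corollary is a short asymptotic computation. The two points worth stating carefully are (i) that $\bar t$ depends on the sample path but is almost surely finite, so the conclusion holds for a set of sample paths of measure one rather than surely; and (ii) that the right-hand side of \eqref{eqn: algo 2 error bound} converges to $\frac{\sigma_X}{2\sqrt\eta}$ but need not be tight along every sample path, which is why only a $\limsup$ bound—rather than a limit—is asserted. The step most deserving of explicit justification is the Ces\`aro convergence of the averages $\frac1t\sum_{j=\bar t+1}^{t}U_j$.
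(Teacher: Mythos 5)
Your proof is correct and follows exactly the route the paper intends (the paper states this corollary without an explicit proof, leaving it as the direct limit of the bound \eqref{eqn: algo 2 error bound} from Theorem \ref{thm: online 2}): the initialization term $\tfrac{\bar t}{t}|\hat\mu_{i,\bar t}-\mu|$ vanishes and the averaging term is controlled by $U_j\le\sigma_X/(2\sqrt\eta)$. The only remark is that your Ces\`aro argument proves more than is needed -- for the $\limsup$ upper bound the one-line estimate $\frac1t\sum_{j=\bar t+1}^t U_j\le\frac{t-\bar t}{t}\cdot\frac{\sigma_X}{2\sqrt\eta}\le\frac{\sigma_X}{2\sqrt\eta}$ already suffices.
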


In order to make inferences in real-time, additional costs in accuracy occur from Algorithm \ref{algo: online mean 2}. As observed from \eqref{eqn: algo 2 error bound}, the error between the estimates of Algorithm \ref{algo: online mean 2} and the true mean has a term $\frac{\bar{t}}{t}|\hat{\mu}_{i,t} - \mu|$, introduced by the accumulated error from trimming when $t$ is relatively small. 
However, from Corollary \ref{coro: algo 2 conv}, we see that asymptotically, 
the estimates are only influenced by the corruption rate $\eta$. With the sacrifice of $O(t)$ space and additional $O(\ln t)$ computation complexity, Algorithm \ref{algo: online mean 2} eliminates the initialization error.

\section{Distributed Robust Mean Estimation} \label{sec: distributed}
Previously, we provide algorithms that allow each agent in the network to update their estimates of the mean in an online manner, using only local data. In this section, we propose a multi-agent distributed algorithm that enables the agents to estimate the mean {\it collaboratively} (i.e., by exchanging information with each other). First, we describe the proposed algorithm. Subsequently, we analyze the convergence properties and demonstrate the improvement in the learning rate.

To simplify analysis, an agent $o \in \mathcal{V}$ is randomly selected initially, whose trimming values $\alpha_o$ and $\beta_o$ are eventually transmitted to all agents in the network, i.e., $\alpha_i = \alpha_o$ and $\beta_i = \beta_o, \forall i \in \mathcal{V}$. Depending on the algorithm used, $\alpha_o$ and $\beta_o$ can be fixed or updated at each time step. 

At the beginning of each time step $t$, upon receiving a new data point $\tilde{x}_{i,t}$, agent $i$ updates its local estimate $\hat{\mu}_{i,t}$ using either Algorithm \ref{algo: online mean 1} or Algorithm \ref{algo: online mean 2}, determined by the system designer according to hardware conditions and applications.  
After the local update, each agent transmits its estimated mean $\hat{\mu}_{i,t} = \hat{\mu}_{i,t}^0$ to its neighbors. Here, we use superscript $k$ in $\hat{\mu}_{i,t}^k$ to denote the rounds of communications. At each time step, each agent can communicate $K \in \mathbb{N}$ times with its neighbors, updating its local estimates following
\begin{equation}\label{eqn: mu global update}
    \hat{\mu}_{i,t}^{k+1} = \frac{\sum_{j\in\mathcal{N}_i}\hat{\mu}_{j,t}^k}{|\mathcal{N}_i|}, \quad k = 0,\ldots,K-1.
\end{equation}

We provide the pseudo-code in Algorithm \ref{algo: distMean}. We only include the online local update from Algorithm \ref{algo: online mean 1} as the distributed algorithm can be modified for Algorithm \ref{algo: online mean 2}.

\begin{algorithm}
\SetKwInOut{Initialization}{Initialization}
\SetKwInOut{Parameter}{Parameter}
\Initialization{$\hat{\mu}_{i,t_0}^K = 0$}
\KwIn{$t_0, \delta, \eta, K$, a sequence of corrupted data $\tilde{x}_{i,t}$}
\For{$t \in \{t_0, t_0 + 1, \ldots\}$}{
\If{$t < t_0$}{
Store data in $\tilde{Y}_{i,t}$}
\If{$t = t_0$}{
\If{$i = o$}{
Compute $\epsilon_{t_0}$ as per \eqref{eqn: epsilon def}\\
Sort data in $\tilde{Y}_{i,t_0}$, determine and transmit trimming thresholds $\alpha_{o,t_0}, \beta_{o,t_0}$}}
\Else{Receive and transmit $\alpha_{o,t_0}, \beta_{o,t_0}$}
\If{$t > t_0$}{
Update mean as     
\begin{equation}\label{eqn: MA update}
 \hat{\mu}_{i,t}^0 = \frac{(t-t_0-1)\hat{\mu}_{i,t-1}^K + \phi_{\alpha_{o,t_0},\beta_{o,t_0}}(\tilde{x}_{i,t})}{t-t_0}   
\end{equation}
\For{$k = 1:K$}{
Transmit $\hat{\mu}_{i,t}^{k-1}$ to neighbors\\
Receive $\hat{\mu}_{\mathcal{N}_i\setminus\{i\},t}^{k-1}$ from neighbors\\
Update mean estimate $\hat{\mu}_{i,t}^{k} = \frac{\sum_{j\in\mathcal{N}_i}\hat{\mu}_{j,t}^{k-1}}{|\mathcal{N}_i|}$
}
}
}
\caption{Distributed Robust Mean Estimation of Agent $i$}
\label{algo: distMean}
\end{algorithm}

We analyze the theoretical properties of the proposed distributed algorithm. First, we rewrite the update \eqref{eqn: mu global update} as
\begin{equation}\label{eqn: mu global matrix}
    \hat{\mu}_{t}^{k+1} = (I + D)^{-1}(I+A) \hat{\mu}_{t}^k = P\hat{\mu}_{t}^k,
\end{equation} where $\hat{\mu}_{t}^{k} \in \mathbb{R}^{m}$ is a vector of all agents' estimates at communication round $k$ of time step $t$, $I\in\mathbb{R}^{m\times m}$ is the identity matrix, $D$ is the degree matrix, $A$ is the adjacency matrix of the communication network, and finally, $P = (I + D)^{-1}(I+A)$ is referred to as the Perron matrix \cite{olfati2007consensus}.   
At each fixed time step $t$, the group decision value for all agents, denoted $\Tilde{\mu}_t$, is the average of all initial mean estimates $\hat{\mu}_{i,t}^0$ at time $t$. It can be shown that for Algorithm \ref{algo: online mean 1}, the consensus value at time $t$ is
\begin{equation}\label{eqn: consensus 1}
    \tilde{\mu}_{t} = \frac{1}{m(t-t_0)} \sum_{i = 1}^m \sum_{j = t_0+1}^t \phi_{\alpha_{o,t_0},\beta_{o,t_0}}(\tilde{x}_{i,j}).
\end{equation}
If all agents apply Algorithm \ref{algo: online mean 2}, the consensus value at time $t$ can be expressed as
\begin{equation}\label{eqn: consensus 2}
    \tilde{\mu}_{t} = \frac{1}{mt} \sum_{i = 1}^m \sum_{j = 1}^t \phi_{\alpha_{o,j},\beta_{o,j}}(\tilde{x}_{i,j}).
\end{equation}

We provide the following lemmas that bound the error of the group decision mean value to the true mean. The proofs are included in the Appendix. 
\begin{lemma}\label{lemma: consensus conv 1}
Let $\delta \in (0,1)$ be s.t. $\delta \geq 4e^{-(t-t_0)}, \forall t \geq 2t_0$. Following Algorithm \ref{algo: online mean 1}, the consensus value of the entire network $\Tilde{\mu}_t$ satisfies the below condition with probability at least $1-\delta$,
\begin{equation}\label{eqn: tilde mu bound 1}
|\Tilde{\mu}_t  - \mu| \leq 3E(4\epsilon_{t_0},X) + 2\sigma_X\sqrt{\frac{\log(4/\delta)}{m(t-t_0)}}, \forall t \geq 2t_0.
\end{equation}
\end{lemma}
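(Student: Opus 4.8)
The plan is to treat the consensus value $\tilde{\mu}_t$ in \eqref{eqn: consensus 1} as a trimmed-mean estimator applied to a pooled sample of size $m(t-t_0)$, and then invoke essentially the same argument that establishes Lemma \ref{lemma: orig} / Theorem \ref{thm: online 1}. First I would observe that, since all agents use the common trimming thresholds $\alpha_{o,t_0},\beta_{o,t_0}$ computed by agent $o$, the double sum in \eqref{eqn: consensus 1} is exactly $\frac{1}{m(t-t_0)}\sum \phi_{\alpha_{o,t_0},\beta_{o,t_0}}(\tilde{x}_{i,j})$ over the $m(t-t_0)$ data points $\{\tilde{x}_{i,j}: i\in\mathcal{V},\, t_0 < j \le t\}$. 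Because each agent's stream consists of i.i.d.\ copies of $X$ with at most $\eta j$ corruptions by time $j$, the pooled collection of $m(t-t_0)$ points is an $\eta$-corrupted sample (the total corruption is at most $m\eta t$, and the relevant fraction among the $m(t-t_0)$ pooled points is still controlled by $\eta$ once $t \ge 2t_0$, matching the hypothesis).

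Next I would split $|\tilde{\mu}_t - \mu|$ into a bias term and a fluctuation term, exactly as in the proof of Lemma \ref{lemma: orig}. On the event $A$ of Lemma \ref{lemma: Q bounds} (which holds for the sample $\tilde{Y}_{o,t_0}$ that determines the thresholds, with probability at least $1 - 4e^{-\epsilon_{t_0}t_0/12}$), the trimming thresholds satisfy \eqref{eqn: beta}–\eqref{eqn: alpha}, so the deterministic bias contributed by clipping the tails is bounded by $3E(4\epsilon_{t_0},X)$ — this is where the factor $4\epsilon_{t_0}$ in the statement comes from, inherited verbatim from the batch analysis since the thresholds are the same ones as in Section \ref{sec: background}. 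The fluctuation term $\bigl|\frac{1}{m(t-t_0)}\sum(\phi_{\alpha,\beta}(\tilde{x}_{i,j}) - \mathbb{E}[\phi_{\alpha,\beta}(\overline{X})+\mu])\bigr|$ is now an average of $m(t-t_0)$ (conditionally) i.i.d.\ bounded terms rather than $t-t_0$, so a Hoeffding/Bernstein-type concentration bound — the same one used to get the $2\sigma_X\sqrt{\log(4/\delta)/(t-t_0)}$ term in Theorem \ref{thm: online 1} — yields the improved $2\sigma_X\sqrt{\log(4/\delta)/\bigl(m(t-t_0)\bigr)}$ term, since the effective sample size is multiplied by $m$. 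A union bound over the threshold-event and the concentration-event, together with $\delta \ge 4e^{-(t-t_0)}$, keeps the total failure probability below $\delta$.

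The main obstacle is handling the dependence/corruption bookkeeping carefully: the thresholds $\alpha_{o,t_0},\beta_{o,t_0}$ are computed from agent $o$'s own sample, which is then also part of the pooled data being averaged, and the $m$ agents' corruptions are adversarial and may be coordinated. The clean way around this is to condition on $\tilde{Y}_{o,t_0}$ (equivalently on the realized thresholds and on event $A$), after which $\phi_{\alpha_{o,t_0},\beta_{o,t_0}}$ is a fixed bounded function, and then argue that the conditional distribution of the remaining pooled points is still an $\eta$-corrupted i.i.d.\ sample to which the batch-style concentration argument applies; the threshold-clip range $[\alpha,\beta]$ being contained in $[\mu + Q_{\epsilon_{t_0}/2},\, \mu + Q_{1-\epsilon_{t_0}/2}]$ via \eqref{eqn: beta}–\eqref{eqn: alpha} gives the boundedness needed for concentration. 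Since this is the identical structure to the single-agent proof of Theorem \ref{thm: online 1} (which the paper already defers to the Appendix), I would present the argument by reduction: state that replacing $t-t_0$ by $m(t-t_0)$ throughout that proof, and noting the pooled sample remains $\eta$-corrupted, gives \eqref{eqn: tilde mu bound 1} directly.
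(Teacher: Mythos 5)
Your proposal is correct and follows essentially the same route as the paper: the appendix proof of Lemma \ref{lemma: consensus conv 1} likewise treats \eqref{eqn: consensus 1} as a pooled trimmed-mean over $m(t-t_0)$ points with the common thresholds $\alpha_{o,t_0},\beta_{o,t_0}$, reruns the Bernstein step of Theorem \ref{thm: online 1} with effective sample size $mT$ to get the $\sqrt{\log(4/\delta)/(m(t-t_0))}$ term, bounds the at most $m\eta t$ corrupted points by $E(\epsilon_{t_0},X)$, and combines by the triangle inequality. (Minor note: the averaged data $j>t_0$ is disjoint from the threshold-determining sample $\tilde{Y}_{o,t_0}$, so the dependence issue you flag does not actually arise, though your conditioning fix is harmless.)
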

From the results above, we see that the convergence rate of the final estimates of each agent to the true mean is increased by a factor of $\sqrt{1/m}$. In Algorithm \ref{algo: online mean 2}, since the upper bound is loose, we do not observe a rate improvement from the following result.
\begin{lemma}\label{lemma: consensus conv 2}
Following Algorithm \ref{algo: online mean 2}, for all sample paths in a set of measure 1, there exists a finite time $\bar{t}$, such that for all $t \geq \bar{t}$, the consensus value of the entire network $\Tilde{\mu}_t$ satisfies condition
\begin{equation}\label{eqn: tilde mu bound 2}
|\Tilde{\mu}_t  - \mu| \leq \frac{\bar{t}}{t} |\tilde{\mu}_{\bar{t}} - \mu| + \frac{1}{t}\sum_{j = \bar{t}+1}^t\frac{\sigma_X}{\sqrt{4\eta + 6\frac{\log(4/\delta)}{j}}}.
\end{equation}
\end{lemma}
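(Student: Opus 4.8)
The plan is to mirror the proof of Theorem~\ref{thm: online 2} almost verbatim, but applied to the consensus value \eqref{eqn: consensus 2} in place of a single agent's estimate, and with the key observation that in the distributed scheme the only randomness governing the trimming operators $\phi_{\alpha_{o,j},\beta_{o,j}}$ is that of the designated agent $o$. First I would define, for each $j\in\mathbb{N}$, the \emph{bad event} $B_j=1$ if event $A$ of Lemma~\ref{lemma: Q bounds} fails for agent $o$ at time $j$ (equivalently, if \eqref{eqn: beta} or \eqref{eqn: alpha} fails for $\alpha_{o,j},\beta_{o,j}$), and $B_j=0$ otherwise. By Lemma~\ref{lemma: Q bounds}, $\mathbb{P}[B_j]\le 4e^{-\epsilon_j j/12}$, and since $\epsilon_j\ge 8\eta>0$ the series $\sum_{j=1}^{\infty}\mathbb{P}[B_j]$ converges; Borel--Cantelli then gives a set of sample paths of measure $1$ on which only finitely many bad events occur, hence a sample-path-dependent finite time $\bar{t}$ beyond which \eqref{eqn: beta} and \eqref{eqn: alpha} hold for agent $o$ at every $j\ge\bar{t}$. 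This $\bar{t}$ is the one appearing in the statement.

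Next I would split the double sum in \eqref{eqn: consensus 2} at $\bar{t}$ and apply the triangle inequality as in \eqref{eqn: algo2 intermediate}:
\begin{multline*}
|\tilde{\mu}_t-\mu|\le\Bigl|\tfrac{1}{mt}\sum_{i=1}^m\sum_{j=1}^{\bar{t}}\phi_{\alpha_{o,j},\beta_{o,j}}(\tilde{x}_{i,j})-\tfrac{\bar{t}}{t}\mu\Bigr|\\
+\Bigl|\tfrac{1}{mt}\sum_{i=1}^m\sum_{j=\bar{t}+1}^t\phi_{\alpha_{o,j},\beta_{o,j}}(\tilde{x}_{i,j})-\tfrac{t-\bar{t}}{t}\mu\Bigr|.
\end{multline*}
The first term equals $\tfrac{\bar{t}}{t}|\tilde{\mu}_{\bar{t}}-\mu|$ directly from the definition \eqref{eqn: consensus 2} of $\tilde{\mu}_{\bar{t}}$. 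For the second term I would use that $\phi_{\alpha,\beta}$ is a clip operator, so that $\alpha_{o,j}\le\phi_{\alpha_{o,j},\beta_{o,j}}(\tilde{x}_{i,j})\le\beta_{o,j}$ holds for \emph{every} agent $i$ irrespective of whether $\tilde{x}_{i,j}$ was corrupted; combining this with \eqref{eqn: beta}, \eqref{eqn: alpha}, \eqref{eqn: upper bound on Q} and \eqref{eqn: epsilon def} exactly as in the corresponding step of Theorem~\ref{thm: online 2} yields, for all $j\ge\bar{t}$ and all $i$, $-U_j+\mu\le\phi_{\alpha_{o,j},\beta_{o,j}}(\tilde{x}_{i,j})\le U_j+\mu$ with $U_j=\sigma_X/\sqrt{4\eta+6\log(4/\delta)/j}$. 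Summing over $j$ and averaging over the $m$ agents (which is free, since the inner average of a $j$-dependent constant is that constant) bounds the second term by $\tfrac1t\sum_{j=\bar{t}+1}^t U_j$, and adding the two bounds gives \eqref{eqn: tilde mu bound 2}.

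The argument is essentially bookkeeping on top of Theorem~\ref{thm: online 2}; the single point that needs care is recognizing that once agent $o$'s trimming thresholds satisfy \eqref{eqn: beta}--\eqref{eqn: alpha}, the envelope bound $-U_j+\mu\le\phi_{\alpha_{o,j},\beta_{o,j}}(\tilde{x}_{i,j})\le U_j+\mu$ is uniform over \emph{all} agents $i$ and insensitive to the adversary's corruptions of each $\tilde{x}_{i,j}$ — it is this that makes the per-agent sums and the $\tfrac1m$ averaging collapse without loss. As in Theorem~\ref{thm: online 2}, the bound discards the cancellation among the clipped terms and retains only the deterministic envelope $U_j$, which is why (as noted around Lemma~\ref{lemma: consensus conv 1}) no $\sqrt{1/m}$ rate improvement is visible here.
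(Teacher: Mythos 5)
Your proposal is correct and matches the paper's own proof essentially step for step: the same Borel--Cantelli argument (applied to agent $o$'s trimming thresholds) to obtain the finite $\bar{t}$, the same split of the double sum in \eqref{eqn: consensus 2} at $\bar{t}$ with the first block collapsing exactly to $\tfrac{\bar{t}}{t}|\tilde{\mu}_{\bar{t}}-\mu|$, and the same envelope bound $|\phi_{\alpha_{o,j},\beta_{o,j}}(\tilde{x}_{i,j})-\mu|\le U_j$ via \eqref{eqn: beta}, \eqref{eqn: alpha}, and \eqref{eqn: upper bound on Q}. Your added remarks that the envelope is uniform over agents and corruption-insensitive, and that this is why no $\sqrt{1/m}$ improvement appears, are accurate and consistent with the paper's discussion.
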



In order to analyze the convergence properties of the multi-agent algorithm, we will require the classical result below.
\begin{lemma}[\cite{olfati2007consensus}]\label{lemma: consensus}
Following \eqref{eqn: mu global matrix}, a discrete consensus is globally exponentially reached with a speed that is faster or equal to $\lambda$ for a connected undirected network, where $0 < \lambda < 1$ is the second largest eigenvalue of the matrix $P$. The error of the estimates is given by
\begin{equation}
|\hat{\mu}_{i,t}^k - \tilde{\mu}_t| \leq c\lambda^k |\hat{\mu}_{i,t}^0 - \tilde{\mu}_t|,    
\end{equation} where $c$ is some constant.
\end{lemma}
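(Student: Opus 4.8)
The plan is to read the exponential convergence off the spectral structure of the Perron matrix $P = (I+D)^{-1}(I+A)$: the proof reduces to (i) showing that $P$ has a \emph{simple} dominant eigenvalue $1$ with a \emph{strict} spectral gap, and (ii) expanding the iteration $\hat{\mu}_t^{k} = P^k\hat{\mu}_t^0$ in an eigenbasis. First I would note that $P$ is nonnegative and row-stochastic: since $(I+A)\mathbf{1} = (I+D)\mathbf{1}$ (the $i$-th entry of $A\mathbf{1}$ is the degree $d_i$), we have $P\mathbf{1} = \mathbf{1}$, so $\mathbf{1}$ is a right eigenvector for the eigenvalue $1$. Because $\mathcal{G}$ is connected, $I+A$ (hence $P$) is irreducible, and because $I+A$ has a strictly positive diagonal, $P$ is aperiodic; thus $P$ is primitive. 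By the Perron--Frobenius theorem for primitive matrices, the spectral radius of $P$ is $1$, the eigenvalue $1$ is algebraically simple, and every other eigenvalue $\lambda_j$ obeys $|\lambda_j| < 1$. Finally, $P = (I+D)^{-1/2}\,[(I+D)^{-1/2}(I+A)(I+D)^{-1/2}]\,(I+D)^{1/2}$ is similar to a symmetric matrix, so $P$ is diagonalizable with real spectrum; I set $\lambda := \max_{j \ge 2}|\lambda_j| \in (0,1)$, the subdominant modulus, which is the quantity the statement calls the ``second largest eigenvalue.''

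Next I would identify the limit $\tilde{\mu}_t$ and isolate the disagreement dynamics. The left Perron eigenvector is $\pi = (I+D)\mathbf{1}$ (equivalently $\pi_i = |\mathcal{N}_i|$), since $\pi^{\top}P = \mathbf{1}^{\top}(I+A) = \mathbf{1}^{\top}(I+D) = \pi^{\top}$; hence $\pi^{\top}\hat{\mu}_t^{k}$ is invariant in $k$, and the value the iterates approach is $\tilde{\mu}_t = (\pi^{\top}\hat{\mu}_t^0)/(\pi^{\top}\mathbf{1})$, i.e. $\tilde{\mu}_t\mathbf{1} = \Pi\hat{\mu}_t^0$ with $\Pi = (\mathbf{1}\pi^{\top})/(\pi^{\top}\mathbf{1})$ the spectral projector onto $\mathrm{span}\{\mathbf{1}\}$. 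Writing $e_t^{k} := \hat{\mu}_t^{k} - \tilde{\mu}_t\mathbf{1}$ for the disagreement vector and using $P\Pi = \Pi$, one gets $e_t^{k} = P^k(I-\Pi)\hat{\mu}_t^0 = P^k e_t^0$, where $e_t^0$ lies in the $P$-invariant complement $\mathcal{H} = \{v : \pi^{\top}v = 0\}$ (note $\pi^{\top}(Pv) = \pi^{\top}v$), on which $P$ acts with spectral radius exactly $\lambda$.

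It then remains to bound $\|P^k e_t^0\|$. Since $P|_{\mathcal{H}}$ is diagonalizable with all eigenvalues of modulus at most $\lambda$, there is a constant $c_0$ --- the condition number of its eigenbasis, depending only on $\mathcal{G}$ --- such that $\|e_t^{k}\|_2 \le c_0\lambda^{k}\|e_t^0\|_2$. Combining this with the elementary norm inequalities $|\hat{\mu}_{i,t}^{k} - \tilde{\mu}_t| \le \|e_t^{k}\|_2$ and $\|e_t^0\|_2 \le \sqrt{m}\,\max_i|\hat{\mu}_{i,t}^0 - \tilde{\mu}_t|$ yields $|\hat{\mu}_{i,t}^{k} - \tilde{\mu}_t| \le c\,\lambda^{k}\,|\hat{\mu}_{i,t}^0 - \tilde{\mu}_t|$ for all $i$, with $c = c_0\sqrt{m}$ (the cleanest statement is in terms of the disagreement-vector norm, and the stated per-agent form follows by norm equivalence with this graph-dependent $c$). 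The one genuinely delicate point is step (i): checking that $P$ is primitive, so that Perron--Frobenius supplies both the simple dominant eigenvalue and the strict gap $\lambda < 1$; everything afterward is a routine eigen-expansion. A caveat worth recording is that $P$ is not symmetric, so its eigenvectors are not orthogonal and $c$ genuinely depends on the graph, and that the rate is governed by the subdominant \emph{modulus} $\max_{j \ge 2}|\lambda_j|$, which coincides with the second largest eigenvalue precisely when $P$ is positive semidefinite (e.g. under sufficiently lazy averaging).
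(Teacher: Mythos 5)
The paper never proves this lemma---it imports it from \cite{olfati2007consensus}---so there is no in-paper argument to compare against; your Perron--Frobenius route (row-stochasticity of $P$, primitivity from connectivity plus the positive diagonal, similarity to the symmetric matrix $(I+D)^{-1/2}(I+A)(I+D)^{-1/2}$, and an eigen-expansion on the $P$-invariant complement of $\mathrm{span}\{\mathbf{1}\}$) is the standard proof and is essentially sound. The one substantive issue your argument surfaces is the identity of the consensus value. Your (correct) computation of the left Perron vector $\pi=(I+D)\mathbf{1}$ shows the iterates converge to the degree-weighted average $\sum_i|\mathcal{N}_i|\hat{\mu}_{i,t}^0/\sum_i|\mathcal{N}_i|$, whereas the paper defines $\tilde{\mu}_t$ in \eqref{eqn: consensus 1}--\eqref{eqn: consensus 2} as the unweighted average $\frac{1}{m}\sum_i\hat{\mu}_{i,t}^0$. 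These coincide only when the graph is regular, i.e.\ when $P$ is doubly stochastic (as it is for the Perron matrix $I-\epsilon L$ actually used in \cite{olfati2007consensus}, but not for $(I+D)^{-1}(I+A)$ in general). For a non-regular graph the displayed inequality cannot hold with the unweighted $\tilde{\mu}_t$: the left side tends to the nonzero gap between the weighted and unweighted averages while the right side tends to zero. So what you have proved is the lemma for the weighted consensus value; you should say explicitly that the version with the paper's $\tilde{\mu}_t$ requires regularity or a doubly stochastic $P$.

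A second, smaller gap is in your final step. From $\|e_t^k\|_2\le c_0\lambda^k\|e_t^0\|_2$ and $\|e_t^0\|_2\le\sqrt{m}\max_j|\hat{\mu}_{j,t}^0-\tilde{\mu}_t|$ you obtain a bound in terms of $\max_j|\hat{\mu}_{j,t}^0-\tilde{\mu}_t|$, not in terms of $|\hat{\mu}_{i,t}^0-\tilde{\mu}_t|$ for the \emph{same} agent $i$; "norm equivalence" does not rescue the per-agent form, which fails outright if agent $i$ happens to start exactly at the consensus value while its neighbors do not. The statement your argument actually delivers (and the one the downstream theorems need) is $\max_i|\hat{\mu}_{i,t}^k-\tilde{\mu}_t|\le c\lambda^k\max_i|\hat{\mu}_{i,t}^0-\tilde{\mu}_t|$, or equivalently a bound on the disagreement-vector norm. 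Your remaining caveats---that $c$ depends on the eigenbasis conditioning of the non-symmetric $P$, and that the rate is really the subdominant modulus $\max_{j\ge2}|\lambda_j|$ rather than the second largest eigenvalue per se---are both correct and worth keeping.
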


Below, we present the bound of error for the proposed distributed algorithms and include the proof in the Appendix. 
\begin{theorem}\label{thm: dist convergence 1}
For all $t \geq 2t_0, \forall k \in \{1,2,\ldots,K\}$, and $\forall i \in\mathcal{V}$, let $\delta \in (0,1)$ be s.t. $\delta \geq 4e^{-(t-t_0)}$. When agents follow Algorithm \ref{algo: online mean 1}, the estimated mean of each agent satisfies the following condition with probability at least $1-\delta$,
\begin{multline}\label{eqn: bound of conv algo 1}
|\hat{\mu}_{i,t}^K - \mu| \leq (c\lambda^{K})^{t-t_0}|\hat{\mu}_{i,t_0}^K - \tilde{\mu}_{t_0}| 
\\+ \sum_{j = t_0+1}^t (c\lambda^{K})^{t+1-j}  \frac{2\max(|\alpha_{o,t_0}|,|\beta_{o,t_0}|)}{j-t_0}\\
+ 3E(4\epsilon_{t_0},X) + 2\sigma_X\sqrt{\frac{\log(4/\delta)}{m(t-t_0)}}.    
\end{multline}
\end{theorem}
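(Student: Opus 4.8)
The plan is to separate the error of agent $i$ into a \emph{disagreement} term, measuring how far agent $i$'s $K$-round estimate is from the network's consensus value $\tilde{\mu}_t$, and a \emph{centralized} term, measuring how far $\tilde{\mu}_t$ is from the truth, via the triangle inequality
\[
|\hat{\mu}_{i,t}^K - \mu| \leq |\hat{\mu}_{i,t}^K - \tilde{\mu}_t| + |\tilde{\mu}_t - \mu|.
\]
The second term is dispatched immediately by Lemma \ref{lemma: consensus conv 1}, which gives $|\tilde{\mu}_t - \mu| \leq 3E(4\epsilon_{t_0},X) + 2\sigma_X\sqrt{\log(4/\delta)/(m(t-t_0))}$ for all $t \geq 2t_0$ on a single event of probability at least $1-\delta$ (the event that agent $o$'s trimming thresholds obey the quantile bounds of Lemma \ref{lemma: Q bounds} together with the concentration of the $m(t-t_0)$-fold empirical average); since this event is fixed once and for all time steps, communication rounds, and agents, no further union bound is needed.

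For the disagreement term, write $e_t$ for $|\hat{\mu}_{i,t}^K - \tilde{\mu}_t|$; I would establish the one-step recursion $e_t \leq c\lambda^K\!\left(e_{t-1} + \tfrac{2\max(|\alpha_{o,t_0}|,|\beta_{o,t_0}|)}{t-t_0}\right)$ and unroll it. First, Lemma \ref{lemma: consensus} with $k=K$ gives $e_t \leq c\lambda^K |\hat{\mu}_{i,t}^0 - \tilde{\mu}_t|$. Next, recall from \eqref{eqn: consensus 1} that $\tilde{\mu}_t$ is the across-agent average of the $\hat{\mu}_{i,t}^0$, so it obeys the same moving-average recursion as \eqref{eqn: MA update}, namely $\tilde{\mu}_t = \frac{(t-t_0-1)\tilde{\mu}_{t-1} + \frac{1}{m}\sum_{l=1}^m \phi_{\alpha_{o,t_0},\beta_{o,t_0}}(\tilde{x}_{l,t})}{t-t_0}$. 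Subtracting the two recursions, the common weight $\frac{t-t_0-1}{t-t_0} \leq 1$ multiplies $\hat{\mu}_{i,t-1}^K - \tilde{\mu}_{t-1}$, and the remainder is $\frac{1}{t-t_0}\big(\phi_{\alpha_{o,t_0},\beta_{o,t_0}}(\tilde{x}_{i,t}) - \frac{1}{m}\sum_{l} \phi_{\alpha_{o,t_0},\beta_{o,t_0}}(\tilde{x}_{l,t})\big)$; since $\phi_{\alpha_{o,t_0},\beta_{o,t_0}}$ takes values in $[\alpha_{o,t_0},\beta_{o,t_0}]$, both terms in this difference have absolute value at most $\max(|\alpha_{o,t_0}|,|\beta_{o,t_0}|)$, so $|\hat{\mu}_{i,t}^0 - \tilde{\mu}_t| \leq e_{t-1} + \frac{2\max(|\alpha_{o,t_0}|,|\beta_{o,t_0}|)}{t-t_0}$, which gives the claimed recursion after multiplying by $c\lambda^K$.

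Finally I would unroll the recursion from $t$ down to the first time the update is applied, $t_0+1$, and reindex the forcing sum by $j = t-s+1$; this produces exactly $e_t \leq (c\lambda^K)^{t-t_0}|\hat{\mu}_{i,t_0}^K - \tilde{\mu}_{t_0}| + \sum_{j=t_0+1}^t (c\lambda^K)^{t+1-j}\frac{2\max(|\alpha_{o,t_0}|,|\beta_{o,t_0}|)}{j-t_0}$, and adding the Lemma \ref{lemma: consensus conv 1} bound for $|\tilde{\mu}_t - \mu|$ yields \eqref{eqn: bound of conv algo 1}. The main obstacle is getting this one-step recursion right: one must recognize that $\tilde{\mu}_t$ inherits the moving-average structure (so that comparing $\hat{\mu}_{i,t}^0$ to $\tilde{\mu}_t$, rather than to $\mu$, makes the error telescope and the time-varying weight cancel in the difference), and one must be careful that relaxing $\frac{t-t_0-1}{t-t_0}$ to $1$ is precisely the step that turns the exact (tighter) recursion into the clean geometric contraction appearing in the statement while remaining a valid upper bound. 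Checking the boundary case $t=t_0+1$ (where $\hat{\mu}_{i,t-1}^K$ enters with weight $0$) and noting that $c$ and $\lambda$ depend only on the fixed Perron matrix $P$ and not on $t$ are routine.
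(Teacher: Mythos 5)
Your proposal is correct and follows essentially the same route as the paper: the triangle-inequality split into a consensus-disagreement term and a centralized term, Lemma \ref{lemma: consensus} for the per-step contraction, a one-step recursion on $|\hat{\mu}_{i,t}^K-\tilde{\mu}_t|$ unrolled into the geometric sum, and Lemma \ref{lemma: consensus conv 1} for $|\tilde{\mu}_t-\mu|$. If anything, your derivation of the recursion (subtracting the two moving-average updates exactly, so the $\tfrac{t-t_0-1}{t-t_0}$ weights cancel and only the centered new-data term remains) is cleaner than the paper's, which passes through the looser bound $\hat{\mu}_{i,t}^0 \leq \hat{\mu}_{i,t-1}^K + \phi_{t_0}(\tilde{x}_{i,t})/(t-t_0)$ and an auxiliary quantity $e_t$.
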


\begin{corollary}
Let $\delta \in (0,1)$ be s.t. $\delta \geq 4e^{-(t-t_0)}, \forall t \geq 2t_0$.
For all $k \in \{1,2,\ldots,K\}$ and $\forall K \in \mathbb{N}$ s.t. $c\lambda^K <1$, for any $\xi > 0$, there is a sufficiently large $\bar{t}$, s.t. $\forall t \geq \bar{t}$, when agents $i\in\mathcal{V}$ follow Algorithm \ref{algo: online mean 1}, the estimated mean of each agent satisfies the following condition with probability at least $1-\delta$,
\begin{equation}
|\hat{\mu}_{i,t}^K - \mu|\leq 24\sigma_X\sqrt{4\eta + 6\frac{\log(4/\delta)}{t_0}} + \xi. 
\end{equation}
\end{corollary}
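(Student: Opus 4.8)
The plan is to read off the bound \eqref{eqn: bound of conv algo 1} from Theorem \ref{thm: dist convergence 1} and argue that, on the same event of probability at least $1-\delta$, the only term on its right-hand side that survives as $t\to\infty$ is $3E(4\epsilon_{t_0},X)$, which is itself no larger than $24\sigma_X\sqrt{4\eta+6\log(4/\delta)/t_0}$. Concretely, write the right-hand side of \eqref{eqn: bound of conv algo 1} as $R_t^{(1)}+R_t^{(2)}+R_t^{(3)}+3E(4\epsilon_{t_0},X)$ with $R_t^{(1)}=(c\lambda^{K})^{t-t_0}|\hat{\mu}_{i,t_0}^K-\tilde{\mu}_{t_0}|$, $R_t^{(2)}=\sum_{j=t_0+1}^t(c\lambda^{K})^{t+1-j}\frac{2\max(|\alpha_{o,t_0}|,|\beta_{o,t_0}|)}{j-t_0}$, and $R_t^{(3)}=2\sigma_X\sqrt{\log(4/\delta)/(m(t-t_0))}$. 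It suffices to show $R_t^{(1)}+R_t^{(2)}+R_t^{(3)}\to 0$: then, given $\xi>0$, I would pick $\bar t\ge 2t_0$ large enough that this sum is at most $\xi$ for all $t\ge\bar t$, and the stated inequality is immediate.

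For $R_t^{(1)}$ and $R_t^{(3)}$ this is routine: $|\hat{\mu}_{i,t_0}^K-\tilde{\mu}_{t_0}|$ is a fixed finite quantity and $c\lambda^{K}<1$ by hypothesis, so $R_t^{(1)}\to 0$ geometrically, while $R_t^{(3)}=O(1/\sqrt{t})\to 0$. The hard part is $R_t^{(2)}$. Here $2\max(|\alpha_{o,t_0}|,|\beta_{o,t_0}|)$ is a constant independent of $t$, so setting $r=c\lambda^{K}\in(0,1)$ and $n=t-t_0$, the sum equals a fixed multiple of $\sum_{\ell=1}^{n}r^{n+1-\ell}/\ell$, i.e.\ the convolution of the summable geometric sequence $m\mapsto r^{m+1}$ with the null sequence $\ell\mapsto 1/\ell$. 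Such convolutions tend to $0$; I would either cite this standard fact (a Toeplitz-type lemma: the convolution of an $\ell^1$ sequence with a sequence converging to $0$ converges to $0$), or give the elementary two-piece bound, splitting the sum at $\ell=\lceil n/2\rceil$ so that the low-index part is at most $r^{n/2}(1+\log n)$ and the high-index part is at most $\frac{2}{n}\sum_{m\ge 0}r^{m}=\frac{2}{n(1-r)}$, both of which vanish. This gives $R_t^{(2)}\to 0$.

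It remains to bound the residual term. By \eqref{eqn: upper of E}, $E(4\epsilon_{t_0},X)\le\sigma_X\sqrt{8\cdot 4\epsilon_{t_0}}=\sigma_X\sqrt{32\epsilon_{t_0}}$, and by \eqref{eqn: epsilon def}, $32\epsilon_{t_0}=32\bigl(8\eta+12\frac{\log(4/\delta)}{t_0}\bigr)=64\bigl(4\eta+6\frac{\log(4/\delta)}{t_0}\bigr)$, so $3E(4\epsilon_{t_0},X)\le 3\cdot 8\,\sigma_X\sqrt{4\eta+6\frac{\log(4/\delta)}{t_0}}=24\sigma_X\sqrt{4\eta+6\frac{\log(4/\delta)}{t_0}}$. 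Combined with the previous paragraph, for all $t\ge\bar t$ and on the probability-$(1-\delta)$ event of Theorem \ref{thm: dist convergence 1} we obtain $|\hat{\mu}_{i,t}^K-\mu|\le 24\sigma_X\sqrt{4\eta+6\frac{\log(4/\delta)}{t_0}}+\xi$, as claimed.

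The single genuine obstacle is the asymptotic vanishing of the convolution $R_t^{(2)}$; everything else is a direct consequence of $c\lambda^{K}<1$ or of matching constants through \eqref{eqn: upper of E} and \eqref{eqn: epsilon def}. I would therefore devote the bulk of the write-up to $R_t^{(2)}$ (or simply invoke the Toeplitz-type lemma) and keep the rest terse.
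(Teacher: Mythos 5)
Your proposal is correct and follows essentially the same route as the paper: the paper likewise lets the geometric term and the $O(1/\sqrt{t})$ term vanish, handles the convolution term $R_t^{(2)}$ by citing exactly the Toeplitz-type fact you mention (Lemma~\ref{lemma: nedic} in the Appendix, with $\zeta = c\lambda^K$ and $\gamma_t = \max(|\alpha_{o,t_0}|,|\beta_{o,t_0}|)/(t-t_0)$), and leaves the residual $3E(4\epsilon_{t_0},X)$. Your explicit constant-matching $3E(4\epsilon_{t_0},X)\le 24\sigma_X\sqrt{4\eta+6\log(4/\delta)/t_0}$ and the elementary two-piece bound for the convolution are both correct and merely fill in details the paper leaves implicit.
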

\begin{proof}
From \eqref{eqn: bound of conv algo 1}, as $t \to \infty$, the first term goes to zero. The second term is an instance of Lemma \ref{lemma: nedic} (see Appendix), where $\zeta = c\lambda^K < 1$ and $\gamma_t = \frac{\max(|\alpha_{o,t_0}|,|\beta_{o,t_0}|)}{t-t_0} \to 0.$ As $t \to \infty$, the fourth term also converges to zero. 
\end{proof}

Using the previous results and technique of proofs, we arrive at the convergence properties if all agents follow Algorithm \ref{algo: online mean 2}.
\begin{theorem}
For all sample paths in a set of measure 1, there exists a finite time $\bar{t}$, such that $\forall t \geq \bar{t} + 1, \forall k \in \{1,2,\ldots,K\}$, if all agents $i\in\mathcal{V}$ follow Algorithm \ref{algo: online mean 2}, the estimated mean of each agent satisfies 
\begin{multline}\label{eqn: bound of conv algo 2}
|\mu_{i,t}^K - \mu| \leq (c\lambda^{K})^{t-\bar{t}}|\hat{\mu}_{i,\bar{t}}^K - \tilde{\mu}_{\bar{t}}| 
\\+ \sum_{j = \bar{t}+1}^t (c\lambda^{K})^{t+1-j} \frac{2\max(|\alpha_{o,j}|,|\beta_{o,j}|)}{j}
\\+ \frac{\bar{t}}{t} |\tilde{\mu}_{\bar{t}} - \mu| + \frac{1}{t}\sum_{j = \bar{t}+1}^t\frac{\sigma_X}{\sqrt{4\eta + 6\frac{\log(4/\delta)}{j}}}.
\end{multline}
\end{theorem}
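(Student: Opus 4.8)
The plan is to reuse the two-part decomposition from the proof of Theorem~\ref{thm: dist convergence 1}, with the almost-sure stopping time $\bar t$ of Theorem~\ref{thm: online 2} (equivalently Lemma~\ref{lemma: consensus conv 2}) now playing the role that $t_0$ played there. Fix a sample path in the measure-one set on which no bad event occurs after $\bar t$. By the triangle inequality,
\[
|\hat\mu_{i,t}^K - \mu| \;\le\; |\hat\mu_{i,t}^K - \tilde\mu_t| + |\tilde\mu_t - \mu|,
\]
and the second term is bounded directly by Lemma~\ref{lemma: consensus conv 2}, which already supplies the last two summands of \eqref{eqn: bound of conv algo 2}. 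Hence the only remaining task is to bound the disagreement $e_t \coloneqq |\hat\mu_{i,t}^K - \tilde\mu_t|$ between each agent's final estimate and the group decision value.

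First I would set up a one-step recursion for $e_t$. Combining the Algorithm~\ref{algo: online mean 2} local moving-average update $\hat\mu_{i,t}^0 = \frac{(t-1)\hat\mu_{i,t-1}^K + \phi_{\alpha_{o,t},\beta_{o,t}}(\tilde x_{i,t})}{t}$ with the consensus-value expression \eqref{eqn: consensus 2}, which can equivalently be written $\tilde\mu_t = \frac{(t-1)\tilde\mu_{t-1} + \frac1m\sum_{j}\phi_{\alpha_{o,t},\beta_{o,t}}(\tilde x_{j,t})}{t}$, the freshly updated pre-communication disagreement satisfies
\[
\hat\mu_{i,t}^0 - \tilde\mu_t = \frac{(t-1)\bigl(\hat\mu_{i,t-1}^K - \tilde\mu_{t-1}\bigr) + \bigl(\phi_{\alpha_{o,t},\beta_{o,t}}(\tilde x_{i,t}) - \frac1m\sum_{j}\phi_{\alpha_{o,t},\beta_{o,t}}(\tilde x_{j,t})\bigr)}{t}.
\]
Since every trimmed value lies in $[\alpha_{o,t},\beta_{o,t}]$, the last bracketed term has magnitude at most $\beta_{o,t}-\alpha_{o,t}\le 2\max(|\alpha_{o,t}|,|\beta_{o,t}|)$; applying the $K$-round consensus contraction of Lemma~\ref{lemma: consensus} and dropping the harmless factor $(t-1)/t\le 1$ then gives
\[
e_t \;\le\; c\lambda^K\,|\hat\mu_{i,t}^0 - \tilde\mu_t| \;\le\; c\lambda^K\Bigl(e_{t-1} + \tfrac{2\max(|\alpha_{o,t}|,|\beta_{o,t}|)}{t}\Bigr).
\]

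Then I would unroll this recursion from $t=\bar t$ up to $t$, obtaining $e_t \le (c\lambda^K)^{t-\bar t} e_{\bar t} + \sum_{j=\bar t+1}^t (c\lambda^K)^{t+1-j}\,\frac{2\max(|\alpha_{o,j}|,|\beta_{o,j}|)}{j}$, which is exactly the first two summands of \eqref{eqn: bound of conv algo 2}; adding the Lemma~\ref{lemma: consensus conv 2} bound on $|\tilde\mu_t-\mu|$ finishes the argument (the statement for intermediate rounds $k<K$ is identical up to replacing the outermost $c\lambda^K$ by $c\lambda^k$). The only genuinely new feature relative to Theorem~\ref{thm: dist convergence 1} is that the trimming thresholds are now time-varying, so the driving term of the recursion carries $\max(|\alpha_{o,j}|,|\beta_{o,j}|)$ inside the sum rather than a fixed constant. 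The step most prone to slips is the bookkeeping that couples the per-time-step $1/t$ averaging to the per-round $c\lambda^K$ contraction while tracking that $\bar t$ is sample-path dependent yet almost surely finite, a fact inherited from the Borel--Cantelli argument behind Theorem~\ref{thm: online 2}; note that $c\lambda^K<1$ is not required here, since if it fails the stated bound is merely vacuous.
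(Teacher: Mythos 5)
Your proposal is correct and follows essentially the same route as the paper: the triangle-inequality split into consensus disagreement $|\hat\mu_{i,t}^K-\tilde\mu_t|$ plus group-value error $|\tilde\mu_t-\mu|$, the bound on the latter via Lemma~\ref{lemma: consensus conv 2}, and the $c\lambda^K$-contracted recursion driven by a per-step perturbation of size $2\max(|\alpha_{o,t}|,|\beta_{o,t}|)/t$, exactly as in the paper's Theorem~\ref{thm: dist convergence 1} argument with $t_0$ replaced by the almost-surely finite $\bar t$. Your exact recursion for $\hat\mu_{i,t}^0-\tilde\mu_t$ is in fact slightly cleaner than the paper's inequality-based definition of the perturbation term, but it yields the same bound.
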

\begin{proof}
For all $t \geq \bar{t} + 1$, we have
$
|e_t| \leq \frac{2\max(|\alpha_{o,t}|,|\beta_{o,t}|)}{t}, 
$ and $|e_t| \to 0$ as $t\to \infty$.
The remaining proof is similar to Theorem \ref{thm: dist convergence 1}.
\end{proof}
\begin{corollary}
For all $k \in \{1,2,\ldots,K\}$, and $\forall K \in \mathbb{N}$ such that $c\lambda^K <1$, if all agents $i\in\mathcal{V}$ follow Algorithm \ref{algo: online mean 2}, for all samples paths in a set of measure 1, as $t \to \infty$, the estimated mean of each agent satisfies 
\begin{equation}
\limsup_{t\to\infty}|\hat{\mu}_{i,t}^K - \mu| \leq \frac{\sigma_X}{2\sqrt{\eta}}.  
\end{equation}
\end{corollary}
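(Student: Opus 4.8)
The plan is to take the limit superior as $t\to\infty$ on both sides of the bound \eqref{eqn: bound of conv algo 2} from the preceding theorem, and argue that the first three terms on the right-hand side vanish while the fourth is asymptotically no larger than $\sigma_X/(2\sqrt{\eta})$. Fix a sample path in the measure-one set on which only finitely many bad events occur, so that the (sample-path dependent) time $\bar{t}$ is finite and the inequalities \eqref{eqn: beta}--\eqref{eqn: alpha} hold for every $j\geq\bar{t}$; all quantities indexed by $\bar{t}$ are then fixed finite numbers along this path.

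First I would dispose of the ``easy'' terms. Since $c\lambda^K<1$ by hypothesis, $(c\lambda^K)^{t-\bar{t}}\,|\hat{\mu}_{i,\bar{t}}^K-\tilde{\mu}_{\bar{t}}|\to 0$, and likewise $\tfrac{\bar{t}}{t}|\tilde{\mu}_{\bar{t}}-\mu|\to 0$. For the geometric-convolution term $\sum_{j=\bar{t}+1}^t (c\lambda^K)^{t+1-j}\,\tfrac{2\max(|\alpha_{o,j}|,|\beta_{o,j}|)}{j}$, I would invoke Lemma \ref{lemma: nedic} with $\zeta=c\lambda^K<1$ and $\gamma_j=2\max(|\alpha_{o,j}|,|\beta_{o,j}|)/j$; this requires only that $\gamma_j\to 0$. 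This is where the main work lies: for $j\geq\bar{t}$ the displays \eqref{eqn: beta} and \eqref{eqn: alpha} place $\alpha_{o,j}-\mu$ and $\beta_{o,j}-\mu$ inside an interval whose endpoints are quantiles of $\overline{X}$, and \eqref{eqn: upper bound on Q} bounds every such quantile in absolute value by $\sigma_X/\sqrt{1-\epsilon_j/2}\leq\sigma_X/\sqrt{4\eta}$ once $\epsilon_j$ is small enough; hence $|\alpha_{o,j}|,|\beta_{o,j}|\leq|\mu|+\sigma_X/(2\sqrt{\eta})$ uniformly in $j\geq\bar{t}$, so $\gamma_j\to 0$ and the term vanishes.

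For the fourth term I would reuse $U_j=\sigma_X/\sqrt{4\eta+6\log(4/\delta)/j}$ from the proof of Theorem \ref{thm: online 2}: since $\log(4/\delta)>0$, the sequence $U_j$ is increasing and bounded above by $\sigma_X/\sqrt{4\eta}=\sigma_X/(2\sqrt{\eta})$, whence $\tfrac{1}{t}\sum_{j=\bar{t}+1}^t U_j\leq\tfrac{t-\bar{t}}{t}\cdot\tfrac{\sigma_X}{2\sqrt{\eta}}\leq\tfrac{\sigma_X}{2\sqrt{\eta}}$ for every $t$ (indeed the Ces\`aro mean converges to $\sigma_X/(2\sqrt{\eta})$, but the upper bound is all that is needed). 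Adding the four contributions gives $\limsup_{t\to\infty}|\hat{\mu}_{i,t}^K-\mu|\leq\sigma_X/(2\sqrt{\eta})$, which is the claim; the argument is identical for every round $k\in\{1,\dots,K\}$, mirroring Corollary \ref{coro: algo 2 conv} with the two extra consensus-error terms shown to be negligible.

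The only non-routine step is the almost-sure uniform boundedness of the trimming values $\alpha_{o,j},\beta_{o,j}$ needed to apply Lemma \ref{lemma: nedic}; once that is in place via Lemma \ref{lemma: Q bounds} and \eqref{eqn: upper bound on Q} on the measure-one event, the rest is elementary limit taking.
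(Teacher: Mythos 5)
Your proposal is correct and follows exactly the route the paper intends (the paper leaves this corollary's proof implicit, mirroring its proof of the analogous corollary for Algorithm \ref{algo: online mean 1}): take $t\to\infty$ in \eqref{eqn: bound of conv algo 2}, kill the geometric term, apply Lemma \ref{lemma: nedic} with $\gamma_j = 2\max(|\alpha_{o,j}|,|\beta_{o,j}|)/j$ (using the uniform boundedness of the trimming values on the measure-one event, via Lemma \ref{lemma: Q bounds} and \eqref{eqn: upper bound on Q}), and bound the Ces\`aro average of $U_j$ by $\sigma_X/(2\sqrt{\eta})$ as in Corollary \ref{coro: algo 2 conv}. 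No gaps worth noting.
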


From the above results, the asymptotic rate at which $\hat{\mu}_{i,t}^K$ converges to the true mean is dominated by the convergence of the data and the amount of corruption. Also, compared to the centralized algorithm, increasing the total rounds of communication $K$ of each time step will decrease the deviation of each agent's estimates to the centralized estimates. As time $t \to \infty$, the estimation errors converge to that of the centralized algorithms, with a rate improvement of $\sqrt{1/m}$ for Algorithm \ref{algo: online mean 1}. 

\section{Simulations}
In this section, we demonstrate the performance of our proposed algorithms via simulations. 
We generate data points from a Gaussian distribution $\mathcal{G}_1$ with $\mu_1 = 0, \sigma^2_1 = 1$. We then let the dataset to be $\eta = 0.02$ corrupted with data sampled from another Gaussian distribution $\mathcal{G}_2$ where $\mu_2 = 100$ and $\sigma^2_2 = 1$. We let $\delta = 0.3$. 

In Fig. \ref{fig: algo 1 and 2}, we plot the estimated mean of Algorithm \ref{algo: online mean 1} and \ref{algo: online mean 2} over the number of data points used for the estimation ($t_0 = 100$). Algorithm \ref{algo: online mean 2} obtains a better estimate asymptotically. The estimated mean of Algorithm \ref{algo: online mean 1} is $0.095$ and the estimated mean of Algorithm \ref{algo: online mean 2} is $-0.016$, given $1000$ data points for estimations. Both algorithms achieve good results as without the trimmed operator, the estimated mean is $1.98$. 

We observe that Algorithm \ref{algo: online mean 1} converges to the error introduced by initialization while with additional computation and memory complexity, Algorithm \ref{algo: online mean 2} obtains better estimates. 
\begin{figure}[H]
    \centering
    \includegraphics[width = 0.54\linewidth]{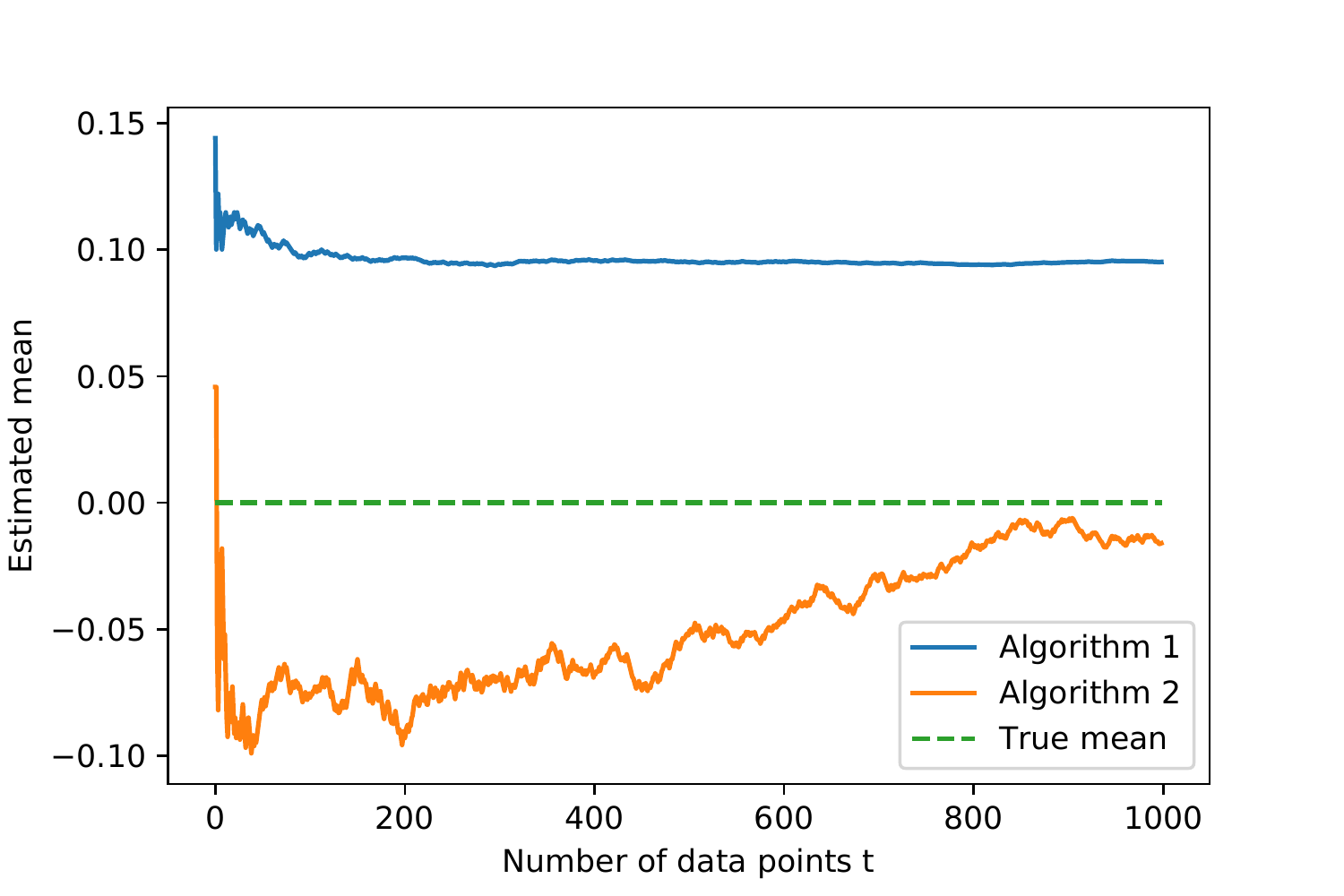}
    \caption{Estimated mean of Algorithm \ref{algo: online mean 1} and \ref{algo: online mean 2}}
    \label{fig: algo 1 and 2}
\end{figure}
\begin{figure}[H]
    \centering
    \begin{subfigure}[b]{0.49\linewidth}
         \centering
         \includegraphics[width=\textwidth]{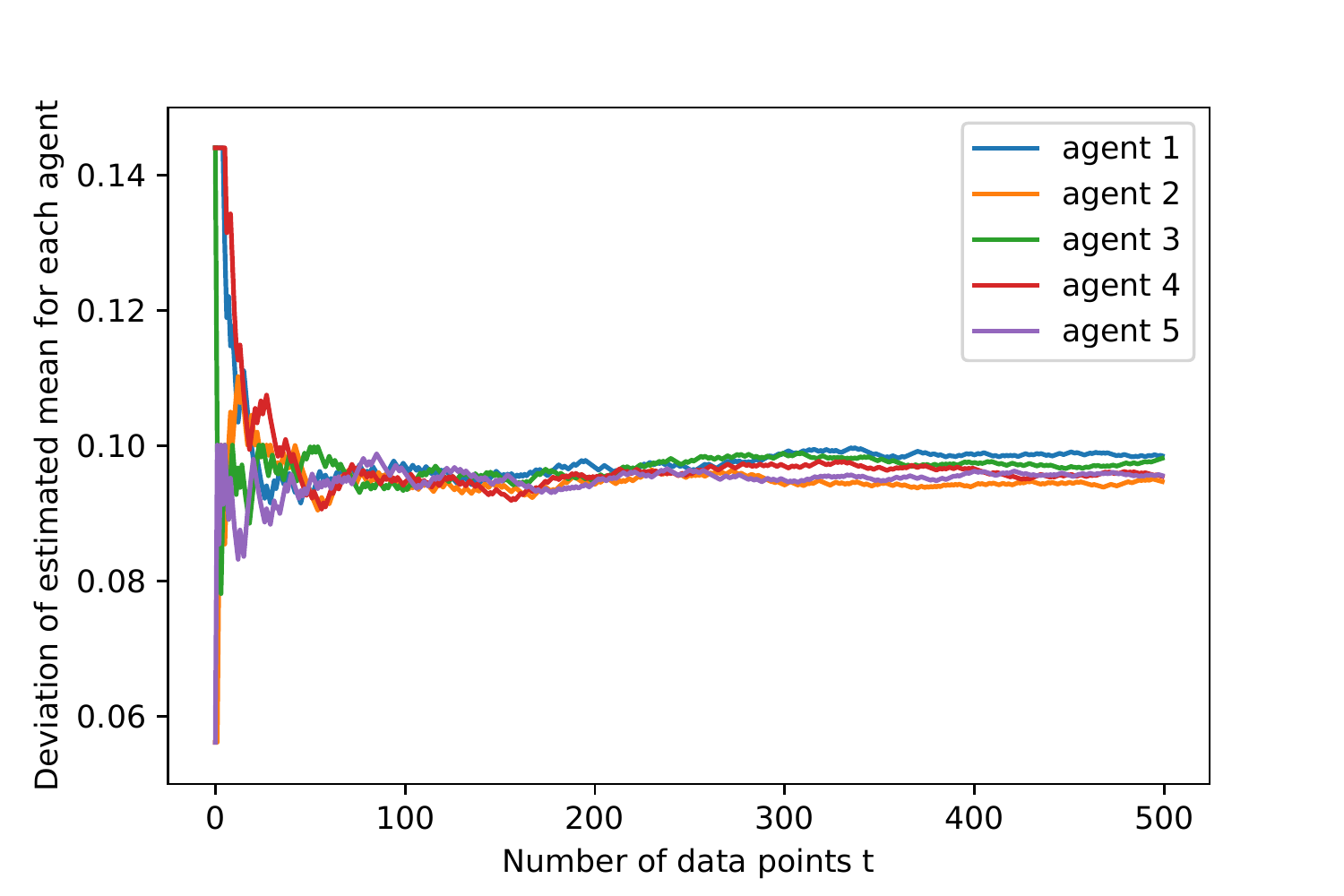}
         \caption{$K = 0$}
    \end{subfigure}
    \begin{subfigure}[b]{0.49\linewidth}
         \centering
         \includegraphics[width=\textwidth]{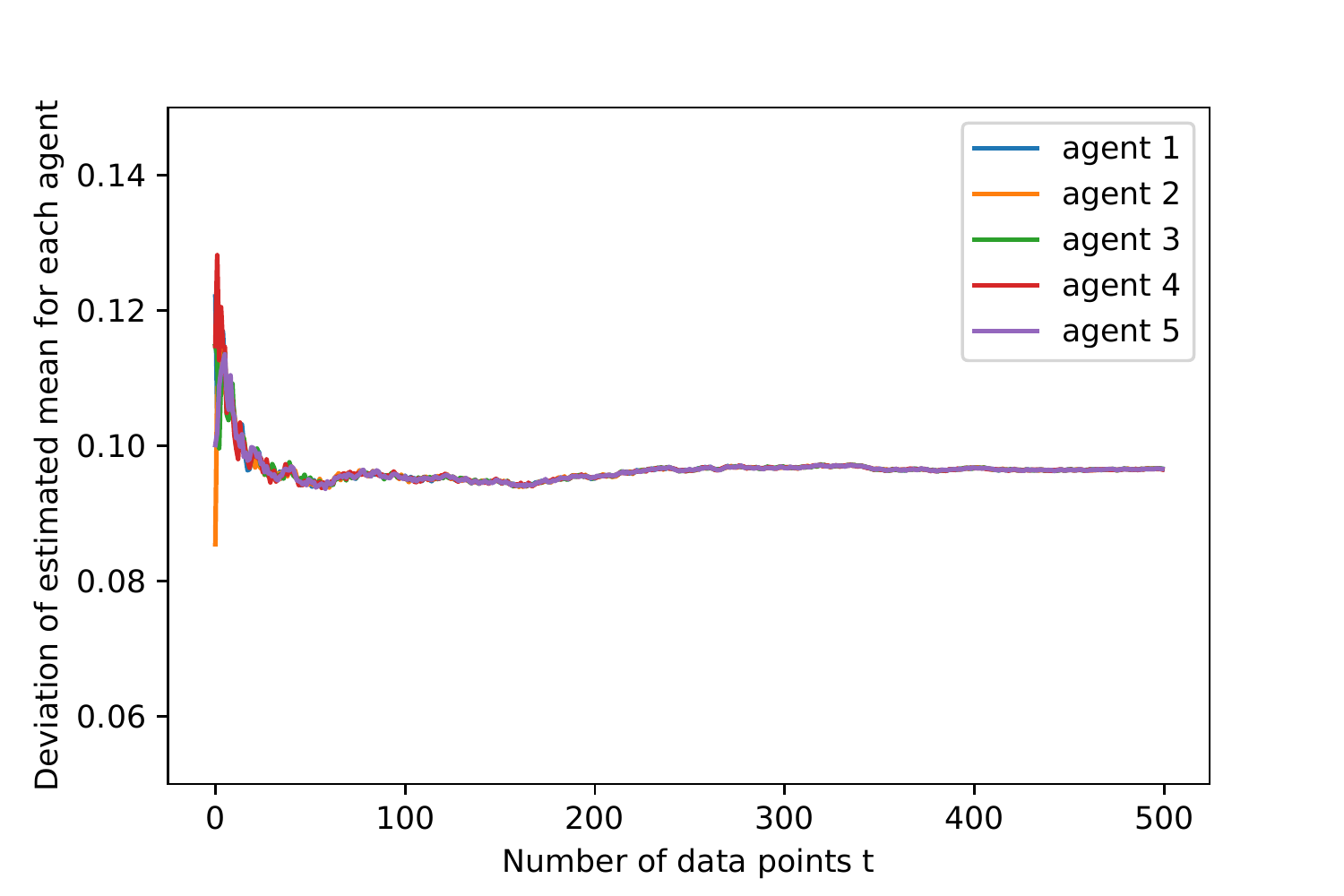}
         \caption{$K = 1$}
     \end{subfigure}
     \caption{Algorithm \ref{algo: online mean 1} with different parameter $K$.}
     \label{fig: dist simu 1}
\end{figure}
\begin{figure}[H]
    \centering
    \begin{subfigure}[b]{0.49\linewidth}
         \centering
         \includegraphics[width=\textwidth]{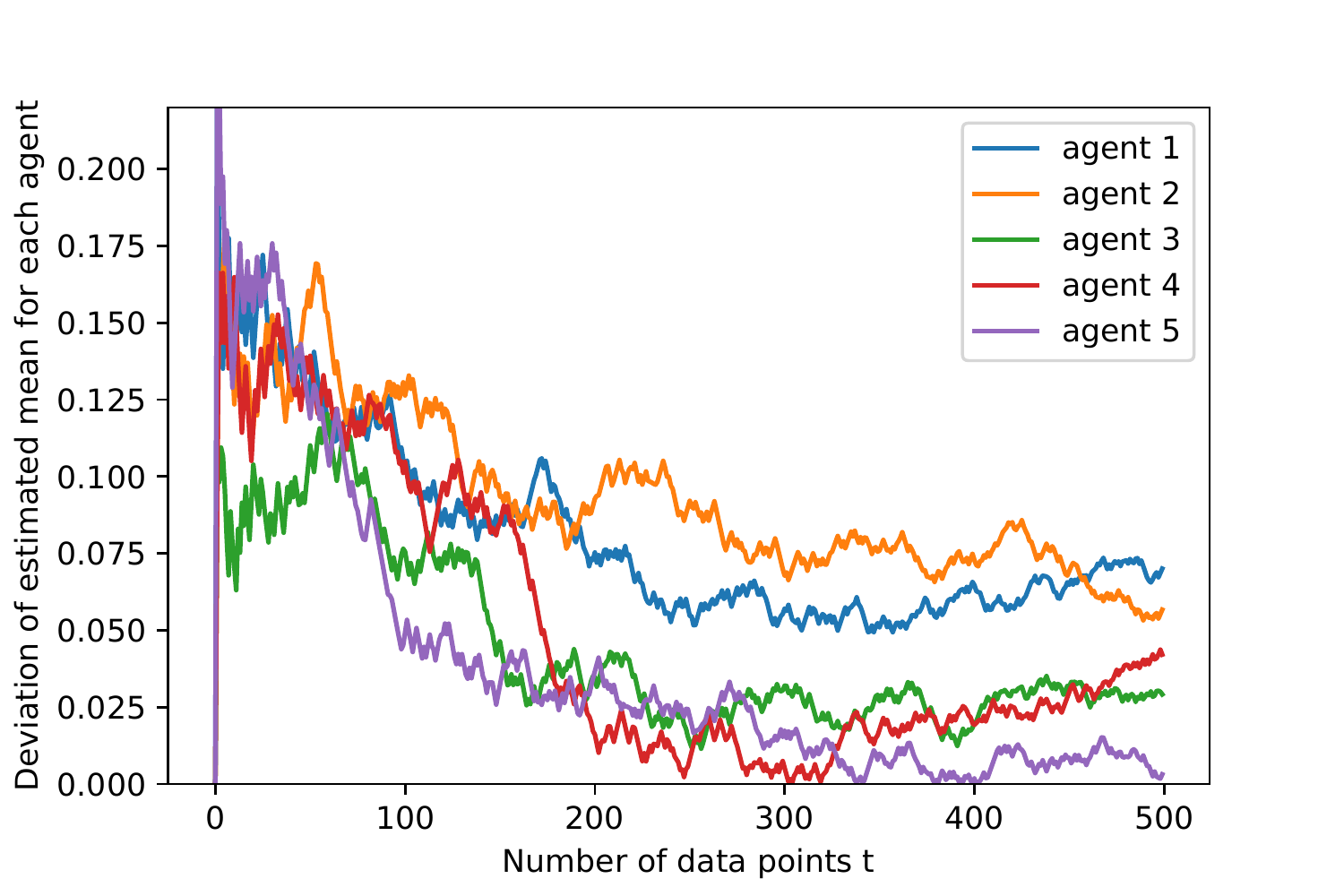}
         \caption{$K = 0$}
    \end{subfigure}
    \begin{subfigure}[b]{0.49\linewidth}
         \centering
         \includegraphics[width=\textwidth]{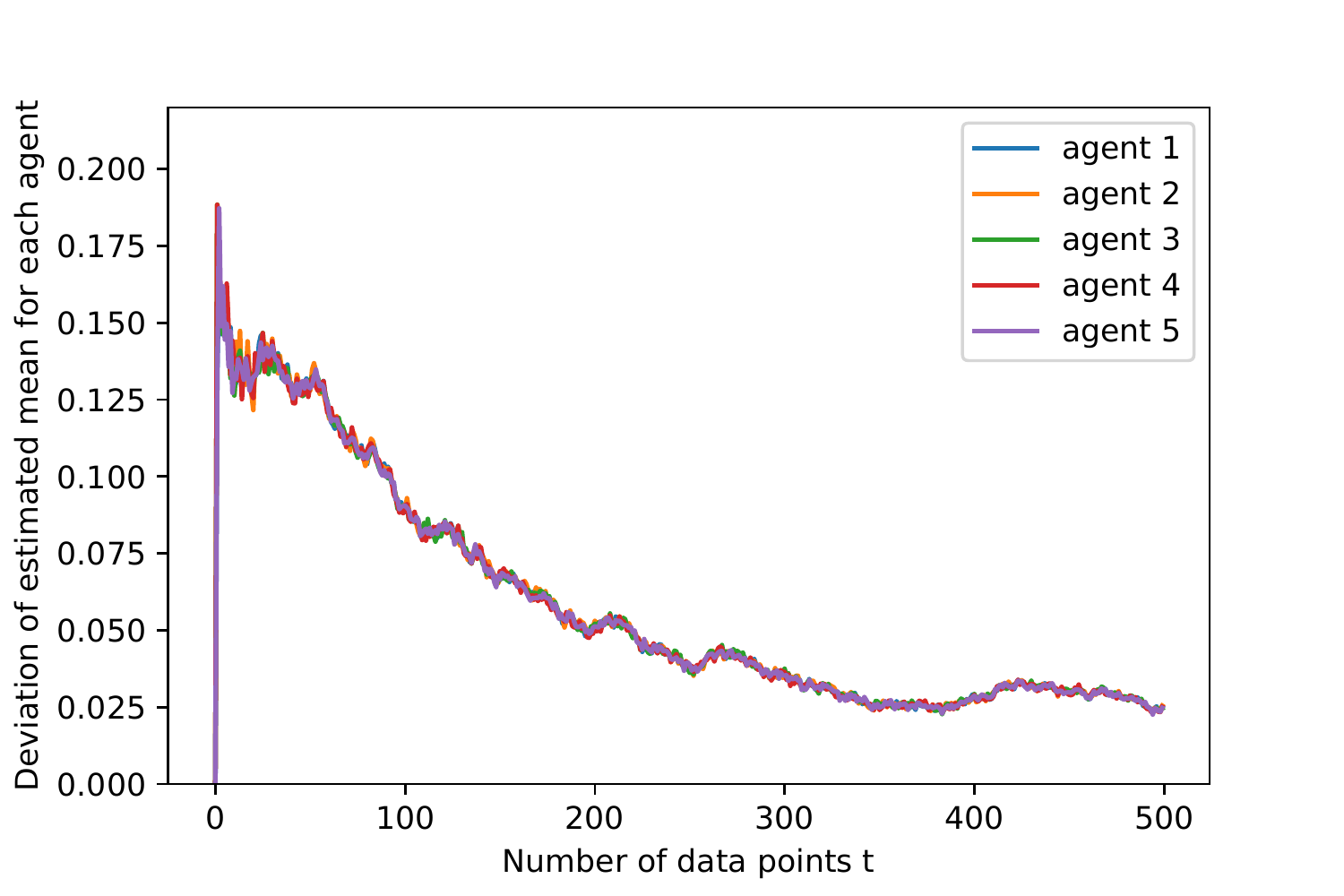}
         \caption{$K = 1$}
     \end{subfigure}
     \caption{Algorithm \ref{algo: online mean 2} with different parameter $K$.}
     \label{fig: dist simu 2}
\end{figure}

We next demonstrate the distributed algorithm using Algorithm \ref{algo: online mean 1} with a network of $m=5$ agents.
In Fig. \ref{fig: dist simu 1} and Fig. \ref{fig: dist simu 2}, we show the convergence of the absolute errors between the estimates and the true mean for each agent with different $K$ values (the number of communication rounds per data point), where $K = 0$ means no communication. All agents' estimates converge to the true mean with some error, and as $K$ increases, the estimates converge to the estimates of the centralized algorithm at a faster rate.

\section{Conclusions and Future Works}
In this work, we proposed online adaptations for robust mean estimation from arbitrarily corrupted data. We subsequently proposed distributed algorithms for agents to collaboratively estimate the mean. We analyzed the convergence properties of the proposed algorithms. However, we only considered a simple case when the corruption level of each agent is upper bounded by $\eta$. 
In the future, we will consider $\eta$-corrupted global data sets, in which case the data of some agents can be entirely corrupted. Additionally, we would like to consider the multi-dimensional mean estimation problem, with the presence of partially observing agents that are receiving data from a subset of the dimensions.

\bibliographystyle{unsrt}
\bibliography{references}

\begin{thebibliography}{10}

\bibitem{landgren2016distributed}
Peter Landgren, Vaibhav Srivastava, and Naomi~Ehrich Leonard.
\newblock Distributed cooperative decision-making in multiarmed bandits:
  Frequentist and bayesian algorithms.
\newblock In {\em 2016 IEEE 55th Conference on Decision and Control (CDC)},
  pages 167--172. IEEE, 2016.

\bibitem{DGAMA}
Tong Yao and Shreyas Sundaram.
\newblock Distributed estimation of sparse inverse covariances.
\newblock In {\em 2021 60th IEEE Conference on Decision and Control (CDC)},
  pages 4077--4082, 2021.

\bibitem{mitra2020new}
Aritra Mitra, John~A Richards, and Shreyas Sundaram.
\newblock A new approach to distributed hypothesis testing and non-bayesian
  learning: Improved learning rate and byzantine resilience.
\newblock {\em IEEE Transactions on Automatic Control}, 66(9):4084--4100, 2020.

\bibitem{yin2018byzantine}
Dong Yin, Yudong Chen, Ramchandran Kannan, and Peter Bartlett.
\newblock Byzantine-robust distributed learning: Towards optimal statistical
  rates.
\newblock In {\em International Conference on Machine Learning}, pages
  5650--5659. PMLR, 2018.

\bibitem{dubey2020cooperative}
Abhimanyu Dubey and Alex Pentland.
\newblock Cooperative multi-agent bandits with heavy tails.
\newblock In {\em International Conference on Machine Learning}, pages
  2730--2739. PMLR, 2020.

\bibitem{leblanc2013resilient}
Heath~J LeBlanc, Haotian Zhang, Xenofon Koutsoukos, and Shreyas Sundaram.
\newblock Resilient asymptotic consensus in robust networks.
\newblock {\em IEEE Journal on Selected Areas in Communications},
  31(4):766--781, 2013.

\bibitem{ALON1999137}
Noga Alon, Yossi Matias, and Mario Szegedy.
\newblock The space complexity of approximating the frequency moments.
\newblock {\em Journal of Computer and System Sciences}, 58(1):137--147, 1999.

\bibitem{tukey1963trim}
John~W Tukey and Donald~H McLaughlin.
\newblock Less vulnerable confidence and significance procedures for location
  based on a single sample: Trimming/winsorization 1.
\newblock {\em Sankhy{\=a}: The Indian Journal of Statistics, Series A}, pages
  331--352, 1963.

\bibitem{ronchetti2009robust}
Elvezio~M Ronchetti and Peter~J Huber.
\newblock {\em Robust statistics}.
\newblock John Wiley \& Sons, 2009.

\bibitem{lugosi2019survey}
G{\'a}bor Lugosi and Shahar Mendelson.
\newblock Mean estimation and regression under heavy-tailed distributions: A
  survey.
\newblock {\em Foundations of Computational Mathematics}, 19(5):1145--1190,
  2019.

\bibitem{tukey1960survey}
John~W Tukey.
\newblock A survey of sampling from contaminated distributions.
\newblock {\em Contributions to probability and statistics}, pages 448--485,
  1960.

\bibitem{huber1992robust}
Peter~J Huber.
\newblock Robust estimation of a location parameter.
\newblock In {\em Breakthroughs in statistics}, pages 492--518. Springer, 1992.

\bibitem{diakonikolas2020outlier}
Ilias Diakonikolas, Daniel~M Kane, and Ankit Pensia.
\newblock Outlier robust mean estimation with subgaussian rates via stability.
\newblock {\em Advances in Neural Information Processing Systems},
  33:1830--1840, 2020.

\bibitem{lugosi2021robust}
Gabor Lugosi and Shahar Mendelson.
\newblock Robust multivariate mean estimation: the optimality of trimmed mean.
\newblock {\em The Annals of Statistics}, 49(1):393--410, 2021.

\bibitem{olfati2007consensus}
Reza Olfati-Saber, J~Alex Fax, and Richard~M Murray.
\newblock Consensus and cooperation in networked multi-agent systems.
\newblock {\em Proceedings of the IEEE}, 95(1):215--233, 2007.

\bibitem{nedic2010constrained}
Angelia Nedic, Asuman Ozdaglar, and Pablo~A. Parrilo.
\newblock Constrained consensus and optimization in multi-agent networks.
\newblock {\em IEEE Transactions on Automatic Control}, 55(4):922--938, 2010.

\end{thebibliography}

\clearpage
\appendix

\begin{lemma}[\cite{nedic2010constrained}]\label{lemma: nedic}
Let $0<\zeta<1$ and let $\{\gamma_t\}$ be a positive scalar sequence. Assume that $\lim_{t\to\infty}\gamma_t = 0$. Then
\[\lim_{t\to \infty}\sum_{l=0}^{t} \zeta^{t-l}\gamma_l = 0.\]
\end{lemma}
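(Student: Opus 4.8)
The plan is to prove Lemma~\ref{lemma: nedic} by the standard device of splitting the convolution sum into an early block and a late block, using that the geometric weights $\zeta^{t-l}$ have tails summable uniformly in $t$ while the sequence $\gamma_l$ is eventually small. Two elementary facts set this up: first, a convergent sequence is bounded, so $0<\gamma_t\le M$ for some $M$ and all $t$; second, since $0<\zeta<1$, every partial geometric sum obeys $\sum_{j=0}^{s}\zeta^{j}\le \frac{1}{1-\zeta}$ regardless of $s$.

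Next I would fix $\varepsilon>0$ and choose $N$ with $\gamma_l<\varepsilon$ for all $l>N$. For $t>N$, write
\[
\sum_{l=0}^{t}\zeta^{t-l}\gamma_l \;=\; \sum_{l=0}^{N}\zeta^{t-l}\gamma_l \;+\; \sum_{l=N+1}^{t}\zeta^{t-l}\gamma_l \;=\; S_1(t)+S_2(t).
\]
For the early block, bound $\gamma_l\le M$ and factor $\zeta^{t-l}=\zeta^{t-N}\zeta^{N-l}$, so that after reindexing $j=N-l$,
\[
S_1(t)\le M\,\zeta^{t-N}\sum_{j=0}^{N}\zeta^{j}\le \frac{M}{1-\zeta}\,\zeta^{t-N},
\]
which tends to $0$ as $t\to\infty$ because $N$ is fixed and $\zeta<1$; hence there is $T>N$ with $S_1(t)<\varepsilon$ for all $t\ge T$. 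For the late block, bound $\gamma_l<\varepsilon$ and reindex $j=t-l$ to get $S_2(t)\le \varepsilon\sum_{j=0}^{t-N-1}\zeta^{j}\le \frac{\varepsilon}{1-\zeta}$, a bound uniform in $t$.

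Combining, for every $t\ge T$ we have $0\le \sum_{l=0}^{t}\zeta^{t-l}\gamma_l < \varepsilon\bigl(1+\tfrac{1}{1-\zeta}\bigr)$, and since $\varepsilon>0$ was arbitrary the limit is $0$ (the lower bound is immediate from positivity of $\gamma_l$ and of $\zeta^{t-l}$). There is no real obstacle here: the only care required is in the order of the quantifiers — pick $N$ from $\gamma_t\to0$ first, and only then pick $T$ from the decay of $\zeta^{t-N}$ — together with routine index bookkeeping in the two geometric sums. The lemma is a soft analytic fact whose role in the paper is simply to drive the accumulated lag terms $\sum_j (c\lambda^K)^{t+1-j}\gamma_j$ appearing in the distributed error bounds to zero.
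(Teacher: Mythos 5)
Your proof is correct and complete: the split into the early block $S_1(t)$ (controlled by boundedness of $\{\gamma_t\}$ and the decaying factor $\zeta^{t-N}$) and the late block $S_2(t)$ (controlled by $\gamma_l<\varepsilon$ and the uniform geometric tail bound $\sum_{j\ge 0}\zeta^j\le \frac{1}{1-\zeta}$) is the standard argument, and you handle the quantifier order correctly by fixing $N$ before choosing $T$. Note that the paper does not prove this lemma at all --- it is imported verbatim from the cited reference \cite{nedic2010constrained} --- so there is no in-paper proof to compare against; your argument supplies exactly the routine analytic fact the paper takes for granted.
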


\subsection{Proof of Theorem \ref{thm: online 1}}
First we state the following lemma in order to complete the proof.
\begin{lemma}\label{lemma: bern}(Bernstein's inequality)
Let $x_1, \ldots, x_t \sim X$ be i.i.d. random variables. If $X- \mathbb{E}[X] \leq a$ for a given $a >0$, then 
for any $\delta \in (0,1)$ and $t \in \mathbb{N}$, we have
\begin{multline}\label{eqn: bernstein}
\mathbb{P}\Big(\frac{1}{t}\sum_{i=1}^t x_i - \mathbb{E}[X] < \frac{a}{3t}\log(1/\delta)+\sqrt{\frac{2\sigma_X^2\log(1/\delta)}{t}} \Big) \\ \geq 1 - \delta.     
\end{multline}
\end{lemma}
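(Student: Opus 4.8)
\subsection*{Proof proposal for Lemma~\ref{lemma: bern}}

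The plan is to run the classical Chernoff--Cram\'er argument with the Bernstein moment bound and then invert the resulting tail estimate to read off the stated high-probability bound. Write $\bar{x}_i = x_i - \mathbb{E}[X]$ and $S_t = \sum_{i=1}^t \bar{x}_i$, so the $\bar{x}_i$ are i.i.d., mean zero, variance $\sigma_X^2$, and $\bar{x}_i \le a$ almost surely. It suffices to prove that for every $u > 0$,
\[
\mathbb{P}(S_t \ge tu) \le \exp\!\Big(-\frac{tu^2}{2(\sigma_X^2 + au/3)}\Big),
\]
because taking $u = u^{\star} := \tfrac{a}{3t}\log(1/\delta) + \sqrt{2\sigma_X^2\log(1/\delta)/t}$ and verifying $\tfrac{t(u^{\star})^2}{2(\sigma_X^2 + au^{\star}/3)} \ge \log(1/\delta)$ makes the right-hand side at most $\delta$; the complement of the event $\{S_t \ge tu^{\star}\}$ is exactly the event appearing in \eqref{eqn: bernstein}.

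First I would establish the single-variable moment generating function bound: for $0 < \lambda < 3/a$,
\[
\mathbb{E}\big[e^{\lambda \bar{x}_1}\big] \le \exp\!\Big(\frac{\lambda^2\sigma_X^2/2}{1-\lambda a/3}\Big).
\]
This follows from the elementary inequality $e^s - 1 - s \le \dfrac{s^2/2}{1 - s/3}$, valid for all $s < 3$: for $s \le 0$ it is immediate since the denominator exceeds $1$, and for $0 < s < 3$ it follows from $k! \ge 2\cdot 3^{\,k-2}$ and summing the geometric series. Apply it with $s = \lambda\bar{x}_1$: on $\{\bar{x}_1 \ge 0\}$ one has $\lambda\bar{x}_1 \le \lambda a < 3$ so the bound applies, and on $\{\bar{x}_1 < 0\}$ it applies trivially; moreover $1 - \lambda\bar{x}_1/3 \ge 1 - \lambda a/3 > 0$ on the whole support. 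Taking expectations, using $\mathbb{E}[\bar{x}_1] = 0$, $\mathbb{E}[\bar{x}_1^2] = \sigma_X^2$, and $1 + x \le e^x$, gives the claim.

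Then by independence and Markov's inequality applied to $e^{\lambda S_t}$,
\[
\mathbb{P}(S_t \ge tu) \le e^{-\lambda tu}\big(\mathbb{E}[e^{\lambda\bar{x}_1}]\big)^t \le \exp\!\Big(-\lambda tu + \frac{t\lambda^2\sigma_X^2/2}{1-\lambda a/3}\Big),
\]
and minimizing the exponent over $\lambda \in (0, 3/a)$ at $\lambda = u/(\sigma_X^2 + au/3)$ yields the Bernstein tail bound above. For the inversion step, the positive root of $tu^2 = 2\log(1/\delta)(\sigma_X^2 + au/3)$ is $u_0 = \tfrac{1}{t}\big(\tfrac{a}{3}\log(1/\delta) + \sqrt{\tfrac{a^2}{9}\log^2(1/\delta) + 2t\sigma_X^2\log(1/\delta)}\big)$, and subadditivity of $\sqrt{\cdot}$ gives $u_0 \le u^{\star}$; since $u \mapsto tu^2/(2(\sigma_X^2 + au/3))$ is increasing on $(0,\infty)$, substituting $u^{\star}$ into the tail bound gives probability at most $\delta$, which finishes the argument.

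The main obstacle is the moment generating function estimate under the one-sided boundedness hypothesis $\bar{x}_1 \le a$: care is needed to use the numeric inequality $e^s - 1 - s \le (s^2/2)/(1-s/3)$ only where $s < 3$ (which is precisely why the restriction $\lambda < 3/a$ appears) and to check that $1 - \lambda\bar{x}_1/3$ stays uniformly bounded away from $0$ on the support. Everything after that is routine Chernoff bookkeeping together with a quadratic-formula manipulation.
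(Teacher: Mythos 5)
The paper states this lemma as a classical fact and gives no proof of its own, so the only question is whether your derivation is sound. Your overall strategy (Chernoff--Cram\'er with the Bernstein MGF bound, then inversion) is the right one, but the final inversion step contains a genuine error that breaks the argument.

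The positive root of $tu^2 = 2\log(1/\delta)\bigl(\sigma_X^2 + au/3\bigr)$ is, as you write,
\[
u_0 \;=\; \frac{a}{3t}\log(1/\delta) \;+\; \frac{1}{t}\sqrt{\tfrac{a^2}{9}\log^2(1/\delta) + 2t\sigma_X^2\log(1/\delta)} ,
\]
and subadditivity of $\sqrt{\cdot}$ gives $u_0 \le \frac{2a}{3t}\log(1/\delta) + \sqrt{2\sigma_X^2\log(1/\delta)/t}$, i.e.\ an upper bound with coefficient $\tfrac{2a}{3t}$, not $\tfrac{a}{3t}$. In fact the comparison goes the wrong way for your purposes: since $\tfrac{a^2}{9}\log^2(1/\delta) > 0$, one has $u_0 > u^{\star}$ strictly, so $\tfrac{t(u^{\star})^2}{2(\sigma_X^2 + au^{\star}/3)} < \log(1/\delta)$ and substituting $u^{\star}$ into the tail bound $\exp\bigl(-\tfrac{tu^2}{2(\sigma_X^2+au/3)}\bigr)$ yields a probability \emph{larger} than $\delta$. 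Your argument therefore only proves the weaker statement with $\tfrac{2a}{3t}\log(1/\delta)$ in place of $\tfrac{a}{3t}\log(1/\delta)$. The lemma as stated is still true, but it does not follow from inverting the quadratic-over-linear form of Bernstein: you need the sharper Bennett-type computation, namely the exact Legendre transform of $\lambda \mapsto \tfrac{v\lambda^2}{2(1-c\lambda)}$, which equals $\tfrac{v}{c^2}h_1(cu/v)$ with $h_1(x) = 1+x-\sqrt{1+2x}$; since $h_1^{-1}(s) = s + \sqrt{2s}$, this gives $\mathbb{P}\bigl(S_t \ge \sqrt{2t\sigma_X^2 s} + \tfrac{a}{3}s\bigr) \le e^{-s}$ with the correct coefficient (this is the route in Boucheron--Lugosi--Massart, Theorem 2.10). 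The lossy intermediate bound $h_1(x) \ge x^2/(2(1+x))$ is exactly where your factor of $2$ is given away.

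A secondary, repairable issue: your justification of $e^s - 1 - s \le \tfrac{s^2/2}{1-s/3}$ for $s \le 0$ (``immediate since the denominator exceeds $1$'') is backwards --- a denominator exceeding $1$ makes the right-hand side \emph{smaller} than $s^2/2$, so the claim does not follow from $e^s \le 1+s+s^2/2$. The inequality is nevertheless true for $s \le 0$ (e.g.\ by checking that $h(v) = (6+2v)e^{-v} - v^2 + 4v - 6$ has $h(0)=h'(0)=h''(0)=0$ and $h'''\le 0$ for $v \ge 0$), or you can avoid negative arguments entirely by instead using monotonicity of $s \mapsto (e^s-1-s)/s^2$ to reduce to the point $s = \lambda a > 0$ before applying the series bound.
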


To simplify notation, we omit the subscript $i$ denoting the agent in the proof. Using Lemma \ref{lemma: Q bounds} and following the procedures similar to those in \cite{lugosi2021robust}, we obtain that on event $A$, for $t > t_0$, 
\begin{align}\label{eqn: therome 1 bernstein}
&\frac{1}{t-t_0}\sum_{j = t_0+1}^t \phi_{\alpha_{t_0},\beta_{t_0}}(x_j) \nonumber\\
&\leq \frac{1}{t-t_0} \sum_{j = t_0+1}^t \phi_{\mu+Q_{2\epsilon_{t_0}(\overline{X})}, \mu+Q_{1-\epsilon_{t_0}/2}(\overline{X})}(x_j)\nonumber\nonumber\\
&= \mathbb{E}[\phi_{\mu+Q_{2\epsilon_{t_0}(\overline{X})}, \mu+Q_{1-\epsilon_{t_0}/2}(\overline{X})}(X)] \nonumber\\
& \qquad+ \frac{1}{t-t_0}\sum_{j=t_0+1}^t\Big(\phi_{\mu+Q_{2\epsilon_{t_0}(\overline{X})}, \mu+Q_{1-\epsilon_{t_0}/2}(\overline{X})}(x_j) \nonumber\\
&\qquad - \mathbb{E}[\phi_{\mu+Q_{2\epsilon_{t_0}(\overline{X})}, \mu+Q_{1-\epsilon_{t_0}/2}(\overline{X})}(X)]\Big).
\end{align}
In \cite{lugosi2021robust}, it was shown that 
\begin{multline}\label{eqn: E upper}
\mathbb{E}[\phi_{\mu+Q_{2\epsilon_{t_0}(\overline{X})}, \mu+Q_{1-\epsilon_{t_0}/2}(\overline{X})}(X)] \\
\leq \mu + \mathbb{E}[\overline{X}_{\mathbf{1}_{\overline{X} \geq Q_{1-\epsilon_{t_0}/2}(\overline{X})}}] 
 \leq \mu + E(\epsilon_{t_0},X),
\end{multline}
and similarly,
\begin{equation*}
    \mathbb{E}[\phi_{\mu+Q_{2\epsilon_{t_0}(\overline{X})}, \mu+Q_{1-\epsilon_{t_0}/2}(\overline{X})}(X)] 
    \geq \mu - E(4\epsilon_{t_0},X).
\end{equation*}
Based on its definition, $\phi_{\mu+Q_{2\epsilon_{t_0}(\overline{X})}, \mu+Q_{1-\epsilon_{t_0}/2}(\overline{X})}(X)$ is upper bounded by $Q_{1-\epsilon_{t_0}/2}(\overline{X}) + \mu$ with variance at most $\sigma_X^2$. We define $T = t-t_0$ and apply \eqref{eqn: E upper} as well as \eqref{eqn: bernstein} to \eqref{eqn: therome 1 bernstein}, where 
\begin{multline*}
\phi_{\mu+Q_{2\epsilon_{t_0}(\overline{X})}, \mu+Q_{1-\epsilon_{t_0}/2}(\overline{X})}(X)
\\-\mathbb{E}[\phi_{\mu+Q_{2\epsilon_{t_0}(\overline{X})}, \mu+Q_{1-\epsilon_{t_0}/2}(\overline{X})}(X)] 
\\\leq a = Q_{1-\epsilon_{t_0}/2}(\overline{X}) + E(4\epsilon_{t_0},X).
\end{multline*}
Using Lemma \ref{lemma: bern}, with probability at least $1-\delta/4$, we have
\begin{multline*}
\frac{1}{T}\sum_{j = t_0+1}^t \phi_{\alpha_{t_0},\beta_{t_0}}(x_j) 
    \leq \mu + E(\epsilon_{t_0},X) + \sigma_X \sqrt{\frac{2\log(4/\delta)}{T}} \\ + \frac{Q_{1-\epsilon_{t_0}/2}(\overline{X})\log(4/\delta)}{T}+ \frac{E(4\epsilon_{t_0},X)\log(4/\delta)}{T}.    
\end{multline*}

From \eqref{eqn: upper bound on Q} and \eqref{eqn: epsilon def}, we have that $\forall t_0 > 0$ and $\forall t \geq 2t_0$,
\begin{align*}
    \frac{1}{T}Q_{1-\epsilon_{t_0}/2}(\overline{X})&\log(4/\delta) \leq \frac{\sigma_X}{\sqrt{4\eta+\frac{6\log(4/\delta)}{t_0}}}\frac{\log(4/\delta)}{T} \\
    &\leq\sigma_X\sqrt{\frac{\log(4/\delta)}{6}}\frac{\sqrt{t_0}}{T} \leq\sigma_X\sqrt{\frac{\log(4/\delta)}{6T}}.
\end{align*}
Note that $E(4\epsilon_{t_0},X)\log(4/\delta)/T \leq E(4\epsilon_{t_0},X)$ by the assumption that $\delta \geq 4e^{-(t-t_0)}$, and $E(\epsilon_{t_0},X)\leq E(4\epsilon_{t_0},X)$ from \eqref{eqn: E}. 
As a result,
\begin{equation*}
\frac{1}{T}\sum_{j = t_0+1}^t \phi_{\alpha_{t_0},\beta_{t_0}}(x_j) \leq \mu + 2E(4\epsilon_{t_0},X) + 2\sigma_X \sqrt{\frac{\log(4/\delta)}{T}}.       
\end{equation*}

The proof is similar for the lower tail. We obtain that, on the event $A$, with probability at least $1-\delta/2$,
\begin{multline}\label{eqn: estimator}
  \left|\frac{1}{T}\sum_{j =t_0+1}^t \phi_{\alpha_{t_0},\beta_{t_0}}(x_j) - \mu \right| \leq 2E(4\epsilon_{t_0},X) \\+ 2\sigma_X \sqrt{\frac{\log(4/\delta)}{T}}.  
\end{multline}

Since there are at most $\eta t$ points where $\phi_{\alpha_{t_0}, \beta_{t_0}}(x_j) \neq \phi_{\alpha_{t_0}, \beta_{t_0}}(\tilde{x}_j)$ at time step $t$, and the gap is bounded 
\begin{multline*}
    \left|\phi_{\alpha_{t_0}, \beta_{t_0}}(x_j) - \phi_{\alpha_{t_0}, \beta_{t_0}}(\tilde{x}_j) \right| \\ \leq |Q_{\epsilon_{t_0}/2}(\overline{X})| + |Q_{1-\epsilon_{t_0}/2}(\overline{X})|,
\end{multline*}
it follows that since $\eta \leq \epsilon_{t_0}/8$, $t \geq 2t_0$, and $t / T \leq 2$,
\begin{align*}
\frac{1}{T}\Big|\sum_{j = t_0 + 1}^t &\phi_{\alpha_{t_0}, \beta_{t_0}}(x_j) - \sum_{j = t_0 + 1}^t \phi_{\alpha_{t_0}, \beta_{t_0}}(\tilde{x}_j)\Big| \\
&\quad \leq \eta\frac{t}{T}(|Q_{\epsilon_{t_0}/2}(\overline{X})| + |Q_{1-\epsilon_{t_0}/2}(\overline{X})|)\\
&\quad \leq \frac{\epsilon_{t_0}}{2}\max\{|Q_{\epsilon_{t_0}/2}(\overline{X})|,|Q_{1-\epsilon_{t_0}/2}(\overline{X})|\}
\end{align*}
and 
\begin{align*}
\frac{\epsilon_{t_0}}{2}Q_{1-\epsilon_{t_0}/2}(\overline{X}) 
&= \mathbb{P}(\overline{X}\geq Q_{1-\epsilon_{t_0}/2}(\overline{X}))Q_{1-\epsilon_{t_0}/2}(\overline{X})\\
&=\mathbb{E}[Q_{1-\epsilon_{t_0}/2}(\overline{X})_{\mathbf{1}_{\overline{X} \geq Q_{1-\epsilon_{t_0}/2}(\overline{X})}}] \\
&\leq \mathbb{E}[\overline{X}_{\mathbf{1}_{ \overline{X}\geq Q_{1-\epsilon_{t_0}/2}(\overline{X})}}],
\end{align*}
where the last inequality comes from the condition $\overline{X}\geq Q_{1-\epsilon_{t_0}/2}$ being satisfied.

Finally, on event $A$, 
\begin{equation}\label{eqn: corrupt}
   \frac{1}{T} \left|\sum_{j = t_0+1}^t \phi_{\alpha_{t_0},\beta_{t_0}}(x_j) - \sum_{j = t_0+1}^t \phi_{\alpha_{t_0},\beta_{t_0}}(\tilde{x}_j)\right| \leq E(\epsilon_{t_0},X).    
\end{equation}
Using triangle inequality for \eqref{eqn: estimator} and \eqref{eqn: corrupt}, we arrive at the result.

\subsection{Proof of Lemma \ref{lemma: consensus conv 1}}
Define $T = t - t_0$. Considering uncorrupted samples, from \eqref{eqn: consensus 1}, we have 
\begin{align*}
    \tilde{\mu}_{t} &= \frac{1}{mT} \sum_{i = 1}^m \sum_{j = t_0+1}^t \phi_{\alpha_{o,t_0},\beta_{o,t_0}}({x}_{i,j})\\
    &\leq \mathbb{E}[\phi_{\mu+Q_{2\epsilon_{t_0}(\overline{X})}, \mu+Q_{1-\epsilon_{t_0}/2}(\overline{X})}(X)] \nonumber\\
& \qquad+ \frac{1}{mT}\sum_{i = 1}^m\sum_{j=t_0+1}^t\Big(\phi_{\mu+Q_{2\epsilon_{t_0}(\overline{X})}, \mu+Q_{1-\epsilon_{t_0}/2}(\overline{X})}(x_{i,j}) \nonumber\\
&\qquad - \mathbb{E}[\phi_{\mu+Q_{2\epsilon_{t_0}(\overline{X})}, \mu+Q_{1-\epsilon_{t_0}/2}(\overline{X})}(X)]\Big).
\end{align*}
Using Bernstein's inequality following procedures in Theorem \ref{thm: online 1}, we obtain 
\begin{multline*}
  \left|\frac{1}{mT}\sum_{i = 1}^m\sum_{j =t_0+1}^t \phi_{\alpha_{o,t_0},\beta_{o,t_0}}(x_{i,j}) - \mu \right| 
  \\\leq 2E(4\epsilon_{t_0},X) + 2\sigma_X \sqrt{\frac{\log(4/\delta)}{mT}}.  
\end{multline*}

Since there at at most $m\eta t$ corrupted points where $\phi_{\alpha_{o,t_0}, \beta_{o,t_0}}(x_{i,j}) \neq \phi_{\alpha_{o,t_0}, \beta_{o,t_0}}(\tilde{x}_{i,j})$ at time step $t$,
\begin{align*}
\frac{1}{mT}\Big|\sum_{i = 1}^m\sum_{j = t_0 + 1}^t &\left(\phi_{\alpha_{o,t_0}, \beta_{o,t_0}}(x_{i,j}) -  \phi_{\alpha_{o,t_0}, \beta_{t_0}}(\tilde{x}_{i,j})\right)\Big| \\
&\quad \leq \frac{m\eta t}{mT}(|Q_{\epsilon_{t_0}/2}(\overline{X})| + |Q_{1-\epsilon_{t_0}/2}(\overline{X})|)\\
&\quad \leq E(\epsilon_{t_0},X).
\end{align*}
With triangle inequality, we arrive at the result.

\subsection{Proof of Lemma \ref{lemma: consensus conv 2}}
Let $\phi_t = \phi_{\alpha_{o,t},\beta_{o,t}}$ denote the updated trimming operator at time step $t$. 
We have shown in Theorem \ref{thm: online 2} that for a set of sample paths of measure 1, there exists a sample path dependent finite time $\Bar{t}$ such that no more bad events $B_t$ occur for $t \geq \bar{t}$. 

Rewriting the update \eqref{eqn: consensus 2} with finite time $\bar{t}$, the deviation between the estimated mean and the true mean can be expressed as
\begin{align*}\label{eqn: algo2 intermediate}
|\tilde{\mu}_t - \mu| &= |\frac{1}{mt}(\sum_{i = 1}^m\sum_{j = 1}^{\bar{t}} \phi_{j}(\tilde{x}_{i,j}) + \sum_{i = 1}^m\sum_{j = \bar{t}+1}^t \phi_{j}(\tilde{x}_{i,j})) - \mu| \\
& \leq \frac{\bar{t}}{t} |\tilde{\mu}_{\bar{t}} - \mu| + |\frac{1}{mt}\sum_{i = 1}^m\sum_{j = \bar{t}+1}^t \phi_{j}(\tilde{x}_{i,j})- \frac{m(t-\bar{t})}{mt}\mu |.
\end{align*}


By definition, $\phi_t(\tilde{x}_{i,t}) \leq \beta_t, \forall i \in \mathcal{V}$ and since \eqref{eqn: beta} holds for all $t\geq \bar{t}$, from \eqref{eqn: upper bound on Q} and \eqref{eqn: epsilon def}, we have
\begin{equation*}
\phi_t(\tilde{x}_{i,t}) \leq  \frac{\sigma_X}{\sqrt{4\eta + 6\frac{\log(4/\delta)}{t}}} + \mu, \forall i \in \mathcal{V}.
\end{equation*}
Let $U_t = \frac{\sigma_X}{\sqrt{4\eta + 6\frac{\log(4/\delta)}{t}}}$. We can provide an upper bound
\begin{equation*}
\left|\frac{1}{mt}\sum_{i = 1}^m\sum_{j = \bar{t}+1}^t \phi_{j}(\tilde{x}_{i,j})- \frac{m(t-\bar{t})}{mt}\mu \right| \leq \frac{1}{t}\sum_{j = \bar{t}+1}^t U_j.
\end{equation*} 

\subsection{Proof of Theorem \ref{thm: dist convergence 1}}
We split the error bound  $|\hat{\mu}_{i,t}^K - \mu|$ using triangle inequality. The error can be split into the error introduced by consensus and the error from the convergence of data. For any $t_0 > 0$,
\begin{align}\label{eqn: thm3 eqn1}
|\hat{\mu}_{i,t_0+1}^K - \mu| &\leq |\hat{\mu}_{i,t_0+1}^K - \Tilde{\mu}_{t_0+1} | + |\Tilde{\mu}_{t_0+1}  - \mu|\\
&\leq c\lambda^K |\hat{\mu}_{i,t_0+1}^0 - \tilde{\mu}_{t_0+1}| + |\Tilde{\mu}_{t_0+1}  - \mu| \label{eqn: thm3 eqn1b}.
\end{align}
Notice that from \eqref{eqn: MA update}, $\hat{\mu}_{i,t}^0 \leq \hat{\mu}_{i,t-1}^K + \frac{\phi_{t_0}(\tilde{x}_{i,t})}{t-t_0}$, where $\phi_{t_0}$ denotes $\phi_{\alpha_{o,t_0},\beta_{o,t_0}}$. 
Define $e_t = \tilde{\mu}_{t-1} - \tilde{\mu}_{t} + \frac{\phi_{t_0}(\tilde{x}_{i,t})}{t-t_0}$ as the difference introduced by new data points at each time step $t$. We can express \eqref{eqn: thm3 eqn1b} as 
\begin{align}\label{eqn: thm3 eqn2}
    |\hat{\mu}_{i,t_0+1}^K - \mu|  
    &\leq c\lambda^K |\hat{\mu}_{i,t_0}^K - \tilde{\mu}_{t_0} + e_{t_0+1}| + |\Tilde{\mu}_{t_0+1}  - \mu|, \nonumber\\
    &\leq c\lambda^K (|\hat{\mu}_{i,t_0}^K - \tilde{\mu}_{t_0}| + |e_{t_0+1}|) + |\Tilde{\mu}_{t_0+1}  - \mu|.
\end{align}
From \eqref{eqn: thm3 eqn1} and \eqref{eqn: thm3 eqn2}, we can observe, $\forall t\geq t_0+1$, 
\begin{equation*}
 |\hat{\mu}_{i,t+1}^K - \Tilde{\mu}_{t+1} | \leq  c\lambda^K (|\hat{\mu}_{i,t}^K - \tilde{\mu}_{t}| + |e_{t+1}|). 
\end{equation*}
Similarly for the next time step $t = t_0 + 2$, we have 
\begin{align*}
|\hat{\mu}_{i,t_0+2}^K - \mu| 
&\leq |\hat{\mu}_{i,t_0+2}^K - \Tilde{\mu}_{t_0+2} | + |\Tilde{\mu}_{t_0+2}  - \mu| \\   
&\leq c\lambda^K (|\hat{\mu}_{i,t_0+1}^K - \tilde{\mu}_{t_0+1}| + |e_{t_0+2}|) \\ &\quad + |\Tilde{\mu}_{t_0+2}  - \mu| \\
& \leq c\lambda^K \Big (c\lambda^K \left(|\hat{\mu}_{i,t_0}^K - \tilde{\mu}_{t_0}| + |e_{t_0+1}|\right)
+ \\ &\quad |e_{t_0+2}| \Big) + |\Tilde{\mu}_{t_0+2}  - \mu|\\
&=  (c\lambda^K)^2  \left(|\hat{\mu}_{i,t_0}^K - \tilde{\mu}_{t_0}| + |e_{t_0+1}|\right)
\\ &\quad + c\lambda^K |e_{t_0+2}| + |\Tilde{\mu}_{t_0+2}  - \mu|
\end{align*}
For all $t \geq t_0+1$, we obtain 
\begin{multline}\label{eqn: dist estimation bound}
|\hat{\mu}_{i,t}^K - \mu| \leq (c\lambda^{K})^{t-t_0}|\hat{\mu}_{i,t_0}^K - \tilde{\mu}_{t_0}| \\+ \sum_{j = t_0+1}^t (c\lambda^{K})^{t+1-j} |e_j| + |\tilde{\mu}_t - \mu| .    
\end{multline}
Next, we derive the upper bound of $|e_t|$. From \eqref{eqn: consensus 1}, we see that $\forall t \geq t_0 + 1$,
\begin{align}\label{eqn: e upper}
    |e_t| \leq \left|\frac{\sum_{i=1}^m \phi_{t_0}(\tilde{x}_{i,t})}{m(t-t_0)}\right| + \left|\frac{\phi_{t_0}(\tilde{x}_{i,t})}{t-t_0}\right|\nonumber\\
    \leq \left|\frac{2\max(|\alpha_{o,t_0}|,|\beta_{o,t_0}|)}{t-t_0}\right|.
\end{align}

Substituting \eqref{eqn: e upper} and \eqref{eqn: tilde mu bound 1} for the corresponding terms in \eqref{eqn: dist estimation bound}, 
we obtain the result.

\end{document}